\newtheorem{theorem}{Theorem}[section]
\newtheorem{proposition}[theorem]{Proposition}
\newtheorem{corollary}[theorem]{Corollary}
\newtheorem{definition}[theorem]{Definition}
\begin{document}

\title[Robust permanence in ecological equations with feedbacks]{Robust permanence for ecological equations with internal and external feedbacks}
\date{}                                          

\bibliographystyle{amnat2}

\maketitle

\begin{center}
\begin{singlespace}{
\noindent \textsc{Swati Patel\footnote{Department of Evolution and Ecology and Graduate Group in Applied Mathematics, University of California, Davis USA 95616, email: swpatel@ucdavis.edu}\footnote{Current Address: Faculty of Mathematics, University of Vienna, Austria} and Sebastian J Schreiber\footnote{Department of Evolution and Ecology and Center for Population Biology, University of California, Davis USA 95616}}
}\end{singlespace}
\end{center}
\textbf{Keywords}: persistence, robust permanence, ecological feedbacks, coexistence, structured populations, eco-evolutionary dynamics

\section*{abstract}
Species experience both internal feedbacks with endogenous factors such as trait evolution and external feedbacks with exogenous factors such as weather. These feedbacks can play an important role in determining whether populations persist or communities of species coexist.  To provide a general mathematical framework for studying these effects, we develop a theorem for coexistence for ecological models accounting for internal and external feedbacks.   Specifically, we use average Lyapunov functions and Morse decompositions to develop sufficient and necessary conditions for robust permanence, a form of coexistence robust to large perturbations of the population densities and small structural perturbations of the models. We illustrate how our results can be applied to verify permanence in non-autonomous models, structured population models, including those with frequency-dependent feedbacks, and models of eco-evolutionary dynamics.  In these applications, we discuss how our results relate to previous results for models with particular types of feedbacks.

\section{Introduction}
Understanding when and how species coexist  is a fundamental problem in ecology.  Permanence theory is a mathematical formalism developed to address this problem for ecological models.  Permanence is a particular form of persistence that ensures populations will persist in the face of rare but large perturbations as well as small and frequent perturbations \citep{Schreiber2006} and hence, is an appropriate notion of coexistence for ecological systems which often experience Òvigorous shake-ups, rather than gentle stirringsÓ~\citep{Jansen1998}.  Theory for showing permanence incorporates a variety of standard approaches for characterizing and analyzing dynamical systems including topological approaches, average Lyapunov functions, and measure theoretic approaches. For a review and history on this theory and these approaches see \citep{Hutson1992, Schreiber2006, Smith2011}. Here, we develop sufficient and necessary conditions for permanence for ecological equations with feedbacks to  internal or external variables.

Biologically, many internal and external variables may provide feedbacks on the ecological dynamics of species. By internal variables, we mean factors intrinsic to the populations.   For example, this is appropriate for species structured by genotypes of an ecologically-important trait, in which selection affects the frequency of each genotype, or for traits that may change due to phenotypic plasticity. In either case, internal trait changes alter population growth and drive changes in population densities, generating a potential feedback to the trait dynamics.  Furthermore, individuals within a population may also be classified into different types (e.g. age or size classes, gender, spatial location) and this may influence their growth as well as the growth of the populations they interact with.  In particular, this population structure is important for species with life stages, between which individuals can transition, or species living in patchy landscapes, between which individuals can disperse.   By external variables, we mean dynamic variables extrinsic to the populations that influence survivorship, growth rates and reproductive rates.  For example, environmental variables such as precipitation or temperature which vary in time or the constructed habitats of ecosystem engineering species often influence these demographic rates. These internal and external variables may influence coexistence and motivate us to characterize permanence in models that account for general feedbacks with these variables.  

Permanence has been studied for general dynamical systems, abstracting beyond classical ecologicla models \citep{Hutson1984_2, Butler1986, Hutson1988, Garay1989, Hale1989}.  For example,  \citet{Garay1989} characterized permanence using Morse decompositions  and \citet{Hutson1984_2, Hutson1988}, by extending work of \citet{Hofbauer1981}, found a characterization using so-called average Lyapunov functions. Combining these approaches, \citet{Garay2003} provided sufficient conditions for robust permanence for ecological equations in the standard form

\begin{equation}\label{basic_model}
\frac{dx_i}{dt}=x_if_i(x) \hspace{3mm} i=1\dots n
\end{equation}
where $x_i$ is population densities and $f_i$ is the per-capita growth rate of population $i$. Robust permanence ensures that permanence holds following small perturbations of the per-capita growth rate equations (Schreiber 2006).  
 \citet{Garay2003} and Schreiber (2000) showed that robust permanence can be characterized in terms of the average per-capita growth rates of missing species for trajectories of \eqref{basic_model} on the extinction set. These ecological equations, however, assume that the per-capita growth rates only depend on the densities of the species, ignoring internal differences amongst individuals in the populations and external influences.  

That internal and external variation exists is indisputable; no two individuals in a population are identical and environmental conditions always vary in time.  From a modeling perspective, the ubiquity of both varying internal and external variables requires careful choice on when and how to include these variables.  In some familiar cases, feedbacks are implicitly modeled, such as in some models of interspecific competition with species competing for a limited resource \citep{Schoener1976} or predator-prey models with prey switching behavior \citep{vanBaalen2001, Hutson1984}. In other cases, feedback variables are explicitly modeled and this allows them to have their own dynamics. Hence, an important scientific goal is to determine when and how these feedbacks impact populations and communities. Some studies have examined permanence in models with specific types of internal or external feedbacks, such as for internally structured populations \citep{Hofbauer2010} or environmental variation \citep{Gatica1988, Schreiber2011b, Roth2016}. However, there is no general framework for dealing explicitly with both internal and external variables. 

In the present paper, we derive sufficient conditions for robust permanence in a general model of interacting populations with internal and external feedbacks and demonstrate how it generalizes and extends prior results of models accounting for these feedbacks.  
Our main permanence results build on average Lyapunov functions developed by  \citet{Garay2003}.  In section 2, we introduce the general model and describe our main assumptions.  Then, we state our sufficient and necessary criteria for permanence and robust permanence in sections 3 and 4, respectively.  In section 5, we apply our main theorem to three distinct models with feedbacks from the environment, population structure, and trait evolution to demonstrate its broad applicability and the importance of internal and external feedbacks on species coexistence.  

\section{Model and Terminology}\label{section:model}
We extend model (\ref{basic_model}) to incorporate internal and external feedbacks.  We suppose that $n$ populations are interacting in a community and that population $i$ has density $x_i$, with $i=1\dots n$. Interactions can include competition, predation as well as mutualisms.  For each population $i$, the per-capita growth rate, $f_i$, depends on  the densities of all the species it interacts with, as well as on another set of $m$ variables.  These $m$ variables can represent a combination of internal factors, such as the stages in a life cycle of a population, and external factors, such as temperature or another environmental variable.  Each of these factors is represented quantitatively by $y= y_1,\dots, y_m\in K \subset \mathbb{R}^m$, and can also change due to feedbacks with the population densities as well as all $m$ factors.
Altogether, the dynamics in this fairly general ecological scenario can be expressed with the differential equation model

\begin{equation}\label{eq:model}
\begin{aligned}
\frac{dx_i}{dt}&= x_i f_i (x, y) \hspace{2.7cm} i=1\dots n\\
\frac{d y_{j}}{dt} &= g_j(x, y)\hspace{3cm} j=1\dots m
\end{aligned}
\end{equation}
where $x=(x_1, \dots, x_n) \in \mathbb{R}_+^n=[0,\infty)^n$ is the vector of population densities. 
Note that both $f$ and $g$ can depend on both $x$ and $y$, capturing the potential feedback between the population densities and the other dynamic variables. The model form is quite general and can apply to a variety of types of feedbacks as illustrated in section \ref{examples}, where we apply our theorem to different biological scenarios.

Let $S=\mathbb{R}_+^n \times K$ be the state space for (\ref{eq:model}). We let $z.t$ denote the solution to (\ref{eq:model}) for initial condition $z=(x,y)\in S$.  For any set $Z\subseteq S$ and $I\subseteq \mathbb{R}_+$, let $Z.I=\{z.t|t\in I, z\in Z\}$.

We make the following standing assumptions:
\begin{description}
\item[S1] $x_if_i$ and $g_j$ are locally Lipschitz functions, and
\item[S2] there exists a compact set $Q\subseteq S$ such that $Q.[0,\infty) \subseteq Q$ and $z.t \in Q$ for $t$ sufficiently large
\end{description}

As we demonstrate in section \ref{examples}, both assumptions hold for many biological models. The first assumption ensures that solutions to (\ref{eq:model}) locally exist and are unique.  The second assumption corresponds to the biological reality that population densities do not grow without bound.

The extinction set $S_0:=\{z=(x,y) \in S | \prod_{i=1}^n x_i=0 \}$ is the set which has at least one species extinct, i.e., with density equal to zero.  Observe from model (\ref{eq:model}) that for any initial condition in $z \in S_0$, $z.t$ stays in $S_0$ for all time, capturing the ``no cats, no kittens" principle of closed ecological systems.

To use our model to identify the conditions that ensure community coexistence, we must formulate a precise notion of coexistence.  The importance of understanding coexistence in ecology has inspired many different notions of coexistence \citep{Schreiber2006}. 
Here, we use the notion of \emph{permanence}, which ensures that there is a positive population density that each species eventually stays above provided all populations are initially present.  Precisely, model (\ref{eq:model}) is \emph{permanent} if there is a $\beta>0$ such that for all $z \in S\backslash S_0$

\begin{equation}\label{permanence}
\liminf_{t\rightarrow \infty} x_i(t) \geq \beta \hspace{7mm} \text{for } i=1,2,\dots,n
\end{equation}
for all $i$, where $x_i(t)$ is the $i^{th}$ component of $z.t=(x,y).t$. 

Permanence implies that if all the species are initially coexisting, then they will continue to coexist, despite rare but large perturbations or frequent small perturbations \citep{Schreiber2006}. In the next section, we present the main theorems, which establishes sufficient and necessary conditions for permanence for models of the form (\ref{eq:model}). 

Before stating our main theorem, we introduce some terminology. The \emph{$\omega$-limit set} of a set $Z\subset S$ is $\omega(Z):=\cap_{t\geq0} \overline{Z.[t,\infty)}$ and the \emph{$\alpha$-limit set} is $\alpha(Z):=\cap_{t\leq0} \overline{Z.(-\infty,t]}$. 
A set $Z\subset S$ is \emph{invariant} if $Z.\mathbb{R} = Z$. A compact invariant set $Z$ is \emph{isolated} if there exists a closed neighborhood $U$ such that for all $z\in U\backslash Z$ there is a $t$ such that $z.t \not\in U$. For any compact invariant set $Z$, a subset $A\subset Z$ is called an \emph{attractor} in $Z$ if there is a neighborhood $U$ of $A$ such that $\omega(U\cap Z)=A$.  The dual \emph{repeller} to an attractor $A$ in $Z$ is $R(A)=\{z\in Z| \omega(z)\cap A=\emptyset\}$ and $A, R(A)$ are called attractor-repeller pairs.   

\section{Permanence theorem}\label{section:perm}
We take advantage of a characterization of permanence involving Morse decompositions.  
Roughly, a Morse decomposition of a compact invariant set is a finite number of disjoint invariant subsets, called Morse sets, ordered in such a way that the flow tends to move from sets of higher order to lower order.  More precisely,

\begin{definition}
A collection of sets $\mathcal{M}=\{M_1, M_2, ..., M_\ell\}$ is a Morse decomposition for a compact invariant set $\Gamma$ if  $M_1, M_2, ..., M_\ell$ are pairwise disjoint, isolated invariant compact sets, called Morse sets, such that for every $z\in \Gamma\backslash \cup_{k=1}^\ell M_k$ there are integers $i< j$ such that $\omega(z)\subset M_i$ and $\alpha(z)\subset M_j$. 
\end{definition}

For a compact invariant set $\Gamma$, Morse decompositions always exist but are not necessarily unique.  Trivially, one Morse decomposition of $\Gamma$ is $\{\Gamma\}$.  However, more refined Morse decompositions are typically more useful.  
In our main theorem, we use Morse decompositions to decompose the global attractor on the extinction set and define conditions on the Morse sets that give permanence for (\ref{eq:model}).

\begin{theorem}\label{maintheorem}
Let $\mathcal{M}=\{M_1, M_2, \dots M_\ell\}$ be a Morse Decomposition restricted to $S_0\cap \Gamma$ where $\Gamma$ is the global attractor for (\ref{eq:model}).  
If, for each $M_k\in \mathcal{M}$, there exists $p_{k1}, \dots, p_{kn}>0$ such that for every $z \in M_k$, there is a $T_z$ such that 
\[
\sum_{i=1}^n p_{ki} \int_0^{T_z} f_i(z.t) dt >0
\] 

then (\ref{eq:model}) is permanent.
\end{theorem}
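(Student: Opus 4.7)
The plan is to establish permanence by showing that the extinction set inside the global attractor, $S_0 \cap \Gamma$, is a uniform repeller for the flow restricted to $\Gamma$. Since $\Gamma$ attracts all trajectories by S2, a uniform repulsion rate from $S_0 \cap \Gamma$ will produce a constant $\beta > 0$ satisfying the permanence condition. To detect this repulsion, I will use the hypothesis to build a local average Lyapunov function $P_k(x) := \prod_{i=1}^n x_i^{p_{ki}}$ associated with each Morse set $M_k$. The inequality $\sum_i p_{ki} \int_0^{T_z} f_i(z.t)\, dt > 0$ is exactly $\log P_k(z.T_z) - \log P_k(z) > 0$ along the orbit through $z$, so $P_k$ serves as an exponent-weighted ``distance to extinction'' witnessing instability transverse to $S_0$ at $M_k$.

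First I would upgrade the pointwise hypothesis to a uniform one via compactness. Define $h_k(z,T) := \sum_{i=1}^n p_{ki} \int_0^T f_i(z.t)\, dt$, which is jointly continuous in $(z,T)$ by S1 together with continuous dependence of solutions on initial data in $Q$. For each $z \in M_k$ the hypothesis gives $h_k(z,T_z) > 0$, and continuity ensures this persists on a product neighborhood. Extracting a finite subcover of $M_k$ yields finitely many times $T^{(1)},\dots,T^{(r)}$, a neighborhood $U_k$ of $M_k$ in $Q$, and a constant $\delta_k > 0$ such that for every $z \in U_k$ at least one $h_k(z, T^{(j)}) \geq \delta_k$. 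This is precisely the hypothesis required by the classical average Lyapunov function theorems of Hofbauer and Garay, applied locally near each Morse set.

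Next I would assemble these local pieces using the Morse ordering. The decomposition $\mathcal{M}$ provides an attractor filtration $\emptyset = A_0 \subset A_1 \subset \dots \subset A_\ell = S_0 \cap \Gamma$ with $M_k$ lying in $A_k \setminus A_{k-1}$ and with dual repellers $R_k$ arranged so that no recurrent chain can connect a Morse set back to itself. Following Garay (2003), one inductively removes neighborhoods of the $M_k$ from the dynamics: the local Lyapunov function $P_k$, combined with the attractor-repeller structure, shows that every orbit entering a small neighborhood of $M_k$ in $\Gamma \setminus S_0$ must eventually leave it, descend in the filtration, and either exit a fixed neighborhood of $S_0 \cap \Gamma$ or approach a strictly lower Morse set. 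Iterating this over $k = \ell, \ell-1, \dots, 1$ produces a uniform escape time from any neighborhood of $S_0 \cap \Gamma$, which is the standard characterization of a uniform repeller and hence yields permanence.

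The main obstacle is the gluing step, since the exponents $p_{ki}$ may differ from one Morse set to the next, so the local functions $P_k$ cannot simply be combined into a single power product on all of $\Gamma$. The acyclicity built into the Morse decomposition is what rescues the argument: it forbids an orbit from cycling through neighborhoods of distinct Morse sets, so the local increments of the $\log P_k$ can be chained in a single direction along the filtration rather than cancelling. The presence of the feedback variables $y \in K$ introduces no essential new difficulty, because $y$ enters the averages only through the base point $z.t = (x,y).t$; the compactness supplied by S2 and the Lipschitz regularity of S1 are exactly what the uniformization and continuity arguments need, so the proof reduces, after the local analysis, to a careful invocation of the Garay-Schreiber average Lyapunov machinery on the augmented state space $S$.
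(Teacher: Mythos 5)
Your proposal is correct and follows essentially the same route as the paper: the power-product average Lyapunov function $P_k(x)=\prod_i x_i^{p_{ki}}$ localized near each Morse set, uniformization of the integral inequality by continuity and compactness to get a neighborhood $V_k$, a bounded time horizon, and a uniform lower bound $c>0$, and then the Morse-decomposition machinery of Garay to assemble the local repulsion into permanence. The only cosmetic difference is that the paper delegates the assembly step to the Corollary to Theorem 2 of \citet{Garay1989} (after verifying, via a maximum-of-$P_k$-along-orbit-closure argument, that each $M_k$ is isolated and $W^s(M_k)\cap(S\setminus S_0)=\emptyset$), whereas you sketch the inductive descent through the attractor filtration directly; these are the same argument packaged differently.
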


In words, if, for each Morse set, the weighted combination (weights are $p_{ki}$) of the average per-capita growth rates over some time period is positive from every point in the Morse set, then there is permanence. 

Given the Morse decomposition restricted to  $S_0\cap \Gamma$, we can show the following partial converse

\begin{corollary}\label{main_cor3}
For each Morse set $M_k$, if there is a $p_{k1}, \dots, p_{kn}>0$ such that for every $z \in  M_k$, there is a $T_z>0$ such that
\[
\sum_i p_{ki} \int_0^{T_z} f_i(z.t) dt <0
\] 
then (\ref{eq:model}) is not permanent
and, more strongly, $S_0\cap \Gamma$ is an attractor in $\Gamma$.
\end{corollary}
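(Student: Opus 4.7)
The plan is to dualize the proof of Theorem~\ref{maintheorem}: rather than constructing an average Lyapunov function that grows near $S_0\cap\Gamma$ (producing a repeller, and hence permanence), I would construct one that decays on each Morse set, forcing $S_0\cap\Gamma$ to be an attractor in $\Gamma$. The failure of permanence then follows because any orbit in the basin of this attractor that starts outside $S_0$ must have some $\liminf_{t\to\infty} x_i(t)=0$, contradicting~\eqref{permanence}.

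For each Morse set $M_k$, I would introduce the local candidate $P_k(x,y)=\prod_{i=1}^n x_i^{p_{ki}}$, which is continuous on $S$, positive off $S_0$, and zero on $S_0$. Its logarithm along the flow satisfies
\[
\log P_k(z.T)-\log P_k(z)=\sum_{i=1}^n p_{ki}\int_0^T f_i(z.t)\,dt,
\]
so the hypothesis asserts that this quantity is strictly negative at $T=T_z$ for every $z\in M_k$. Using continuity of the flow and of the $f_i$ (from S1) and boundedness of orbits within $Q$ (from S2), the inequality persists with the same $T_z$ on an open neighborhood of each $z$ in $S$. Compactness of $M_k$ then yields a finite subcover producing constants $T_k^\ast<\infty$ and $\alpha_k\in(0,1)$ such that for every $w$ in a neighborhood $U_k$ of $M_k$ with $w\notin S_0$, there exists $T(w)\le T_k^\ast$ with $P_k(w.T(w))\le\alpha_k\,P_k(w)$.

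This is precisely the attractor-form of the average Lyapunov function criterion used for Theorem~\ref{maintheorem}: iterating the contraction shows that any orbit remaining in $U_k$ for all forward time has $P_k(w.t)\to 0$, so its $\omega$-limit lies in $S_0$. To combine the local pictures into a global attractor statement, I would induct over the Morse order: an orbit in $\Gamma$ entering a sufficiently small neighborhood of $S_0\cap\Gamma$ either has $\omega$-limit in the local Morse set $M_k$, or exits $U_k$ and, by the definition of a Morse decomposition, can only accumulate on strictly lower-indexed Morse sets, where the same contraction applies. After finitely many such transitions, $\omega(z)\subseteq S_0\cap\Gamma$, which establishes that $S_0\cap\Gamma$ is an attractor in $\Gamma$.

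The main obstacle I anticipate is the uniformity step that upgrades the pointwise $T_z$ and strict inequality to a uniform time $T_k^\ast$ and uniform contraction ratio $\alpha_k<1$ on a neighborhood of $M_k$; this is the dual of the technical core of Theorem~\ref{maintheorem} and is where S1--S2 genuinely enter. A secondary subtlety is ensuring that the basin of $S_0\cap\Gamma$ in $\Gamma$ is strictly larger than $S_0\cap\Gamma$ itself, so that non-permanence actually follows; this is guaranteed because the contraction $P_k(w.T(w))\le\alpha_k\,P_k(w)$ is established precisely for $w\notin S_0$, so the argument itself exhibits orbits outside $S_0$ whose $\omega$-limits lie in $S_0$, directly witnessing the failure of \eqref{permanence}.
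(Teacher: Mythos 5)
Your local analysis is sound and is essentially the time-reversed version of the paper's Appendix~A argument: the uniform contraction $P_k(w.T(w))\le\alpha_k P_k(w)$ on a neighborhood $U_k$ of each $M_k$, obtained from continuity of $(z,T)\mapsto\sum_i p_{ki}\int_0^T f_i(z.t)\,dt$ and compactness of $M_k$, is exactly the right dual estimate. The gap is in the step that assembles these local contractions into the claim that $S_0\cap\Gamma$ is an attractor. Your induction over the Morse order asserts that an orbit off $S_0$ which exits $U_k$ ``can only accumulate on strictly lower-indexed Morse sets.'' But the ordering in the definition of a Morse decomposition constrains only orbits lying \emph{inside} $S_0\cap\Gamma$; it says nothing about where an orbit in $\Gamma\setminus S_0$ goes after leaving $U_k$. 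Nothing in your argument excludes the possibility that such an orbit leaves every $U_k$ and converges to an interior attractor of $\Gamma$ (which is precisely what happens when the system \emph{is} permanent), and even the base case fails: after exiting $U_1$ there is no lower-indexed set to fall to, yet the orbit need not stay near $S_0$. The contraction only controls orbits that \emph{remain} in some $U_k$; the whole difficulty is showing that an entire neighborhood of $S_0\cap\Gamma$ in $\Gamma$ is absorbed.

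The paper avoids this with a one-line reduction: restrict to $\Gamma$, reverse time, observe that $\{M_\ell,\dots,M_1\}$ is a Morse decomposition of $S_0\cap\Gamma$ for the reversed flow and that your hypothesis becomes the hypothesis of Theorem~\ref{maintheorem} for that flow (each $f_i$ changes sign), and conclude that $S_0\cap\Gamma$ is a repeller in reverse time, hence an attractor in forward time. If you prefer to keep your direct construction, the correct global tool is the reversed-time version of the criterion the paper quotes from \citet{Garay1989}: it suffices that each $M_k$ be isolated and that $W^u(M_k)\cap(\Gamma\setminus S_0)=\emptyset$. Your contraction does deliver this---if $z\notin S_0$ had $\alpha(z)\subset M_k$, its backward orbit would eventually stay in $U_k$, and the contraction read backward would force $P_k$ to grow without bound along a precompact orbit, a contradiction---but it is that unstable-set argument, not the forward induction, that must replace your combination step. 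Your final paragraph has the same issue in miniature: exhibiting an actual orbit in $S\setminus S_0$ with $\omega$-limit in $S_0$ requires knowing that some such orbit stays in $\bigcup_k U_k$ for all forward time, which is again the attractor property rather than a consequence of the pointwise contraction.
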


The proof of this partial converse follows from applying Theorem \ref{maintheorem} to the reverse time flow in $\Gamma$ to show that $S_0\cap \Gamma$ is a repeller in reverse time.

Intuitively, for permanence to hold, population densities near extinction have to increase.
Lyapunov functions are functions that increase along solutions on some subset of the state space, and are used to characterize local and global stability of invariant sets, including equilibria.  Extending beyond equilibria, ``average Lyapunov functions", introduced by \citet{Hofbauer1981}, are functions that increase \emph{on average} along solutions.  These can apply to more complex invariant sets, which are common in many population models of the form (\ref{basic_model}). 

In Appendix \ref{mainproof}, we prove Theorem \ref{maintheorem}.  In particular, we define ``good" average Lyapunov functions (GALFs), as introduced by \citet{Garay2003}, and then prove that the existence of a GALF on each Morse set gives permanence.  Then, we show that the condition in Theorem \ref{maintheorem} on the weighted per-capita growth functions implies the existence of a GALF in the standard form

\begin{equation}\label{GALFstandard}
P(x,y)= \Pi_{i=1}^n x_i^{p_i}
\end{equation}

for some vector $p$ with $p_i>0$. The standard form (\ref{GALFstandard}) is zero everywhere on the extinction set, positive everywhere not on the extinction set and has time derivative

\[
\dot{P}=P\sum_{i=1}^n p_i f_i
\]

This time derivative gives a convenient relationship between the function and the per-capita growth rates, $f_i$, of each of the species. Feedbacks with the internal or external variable $y$ will affect the per-capita growth rates, thereby influencing the existence and construction of Lyapunov functions.

\section{Robust permanence}\label{section:robust}
Population models are always approximations of reality.  In the words of \citet{Conley1978} ``if such rough equations are to be of use, it is necessary to study them in rough terms".  In line with this, \citet{Hutson1992} introduced robust permanence, i.e. that permanence holds even with sufficiently small perturbations to the growth functions $f_i$ and \citet{Schreiber2000} subsequently provided conditions for robust permanence for (\ref{basic_model}) using a measure theoretic approach.  More recently, \citet{Garay2003} showed robust permanence for (\ref{basic_model}) using GALFs.  We use this method to extend our permanence result to robust permanence, with respect to perturbations in both the growth functions and the feedback dynamics.

Suppose we have a perturbed system 
\begin{equation}\label{eq:model_pert}
\begin{aligned}
\frac{d x_i}{dt}&= x_i \widetilde{f_i} (x, y) \\
\frac{d y_{j}}{dt} &= \widetilde{g_j}(x, y)
\end{aligned}
\end{equation}

Let $\widetilde{z.t}$ denote the solution of (\ref{eq:model_pert}) with initial condition $z \in S$ and analogously, for set $Z\subseteq S$ and $I\subseteq \mathbb{R}_+$, $\widetilde{Z.I}=\{\widetilde{z.t}| t\in I, z\in Z\}$.
Also, let $\widetilde{\omega}(Z), \widetilde{\alpha}(Z)$ denote the $\omega, \alpha$-limit set for (\ref{eq:model_pert}), respectively. Let $Q$ be as defined previously. 
We define $(\widetilde{f}, \widetilde{g})$ to be a $(\delta, Q)$-perturbation of (\ref{eq:model}) if

\begin{description}
\item[R1] $|\widetilde{f_i}(x, y)-f_i(x, y)| < \delta$ and $|\widetilde{g_j}(x, y)- g_j(x, y)|<\delta$ for all $i, j$ and for all $(x, y)\in Q$
\item[R2] $x_i\widetilde{f_i}$ and $\widetilde{g_j}$ are all locally Lipschitz continuous, and
\item[R3] $\widetilde{Q.\mathbb{R}_+} \subseteq Q$ and for all $z\in S$, $\widetilde{z.t} \in Q$ for $t$ sufficiently large.
\end{description}

Denote the set of all $(\delta, Q)$-perturbations as $\Delta(\delta, Q)$.  This set contains differential equation models that are close to the unperturbed model (\ref{eq:model}), which have solutions that eventually enter the compact set $Q$.  

\begin{definition}
(\ref{eq:model}) is robustly permanent if there is a $\delta>0$ and $\beta>0$ such that for all $(\widetilde{f}, \widetilde{g}) \in \Delta(\delta, Q)$, (\ref{permanence}) holds for all $z\in S\backslash S_0$.
\end{definition}

\begin{theorem}\label{maintheorem_pert}
The conditions in Theorem \ref{maintheorem} imply robust permanence.
\end{theorem}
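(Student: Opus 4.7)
The plan is to show that the hypothesis of Theorem \ref{maintheorem} is an open condition on the right-hand side of \eqref{eq:model} in the $C^{0}$ topology on $Q$, so it is inherited by any sufficiently small $(\delta,Q)$-perturbation, and permanence for \eqref{eq:model_pert} then follows by applying Theorem \ref{maintheorem} to the perturbed system.

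First I would fix a Morse set $M_{k}$ with its weights $p_{k1},\ldots,p_{kn}>0$. For each $z\in M_{k}$ pick $T_{z}$ and $\eta_{z}>0$ with $\sum_{i}p_{ki}\int_{0}^{T_{z}}f_{i}(z.t)\,dt\ge 2\eta_{z}$. Continuous dependence of solutions on initial data and on the vector field (Gr\"onwall, using the locally Lipschitz hypotheses S1 and R2) makes the map
\[
(\widetilde{z},\widetilde{f},\widetilde{g})\;\longmapsto\;\sum_{i=1}^{n}p_{ki}\int_{0}^{T_{z}}\widetilde{f_{i}}(\widetilde{z}.t)\,dt
\]
continuous at $(z,f,g)$, so I obtain an open neighborhood $U_{z}\subset Q$ of $z$ and a threshold $\delta_{z}>0$ such that this integral still exceeds $\eta_{z}$ for every starting point in $U_{z}$ and every $(\widetilde{f},\widetilde{g})\in\Delta(\delta_{z},Q)$. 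Compactness of $M_{k}$ extracts a finite subcover, giving an open neighborhood $V_{k}$ of $M_{k}$, a uniform margin $\eta_{k}>0$, and a scale $\delta_{k}>0$ on which the weighted integral inequality holds uniformly.

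Next I would transfer this uniform estimate to the Morse decomposition of \eqref{eq:model_pert}. Assumption R3 confines perturbed dynamics to $Q$, so the perturbed global attractor $\widetilde{\Gamma}\subset Q$ is upper semicontinuous in $(\widetilde{f},\widetilde{g})$; for $\delta$ small enough, any Morse decomposition of $\widetilde{\Gamma}\cap S_{0}$ can be chosen (by coarsening if necessary) so that each perturbed Morse set is contained in some $V_{k}$, and the weights $p_{ki}$ may be reassigned accordingly. Theorem \ref{maintheorem} applied to the perturbed system then yields permanence. A coexistence constant $\beta$ uniform over $\Delta(\delta,Q)$ can be read off from the GALF $P=\prod_{i}x_{i}^{p_{i}}$ produced in the proof of Theorem \ref{maintheorem}, since its logarithmic derivative $\sum_{i}p_{i}\widetilde{f_{i}}$ retains an averaged positivity margin bounded below by $\min_{k}\eta_{k}$ across all admissible perturbations, so the standard GALF sub-level-set argument gives the same $\beta$ for every $(\widetilde{f},\widetilde{g})\in\Delta(\delta,Q)$.

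The main obstacle I expect is the upper-semicontinuity step: perturbed Morse sets can split, merge, or lose their isolated invariant structure, so there is no canonical bijection between $\mathcal{M}$ and a Morse decomposition of $\widetilde{\Gamma}\cap S_{0}$. The resolution is either to coarsen to a common decomposition subordinate to $\{V_{k}\}$ using continuation of attractor--repeller filtrations, or, more efficiently, to bypass any matching of decompositions by verifying the GALF inequality for \eqref{eq:model_pert} directly from the uniform estimate of the second paragraph---this is essentially the robustness argument of \citet{Garay2003}, which the construction above adapts to the $(x,y)$ setting by exploiting that the $p_{ki}$ depend only on the $x$-coordinates while the continuity argument uses the full state $(x,y)\in Q$.
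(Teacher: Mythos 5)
Your opening estimate is exactly the paper's: by Gr\"onwall, Lipschitz continuity (\textbf{S1}, \textbf{R2}) and \textbf{R1}, the weighted integral $\widetilde{F}(z,T_z)=\int_0^{T_z}\sum_i p_{ki}\widetilde{f_i}(\widetilde{z}.t)\,dt$ stays within $c/2$ of $F(z,T_z)$ on a neighborhood $V_k$ of each \emph{unperturbed} Morse set $M_k$, uniformly over $(\widetilde f,\widetilde g)\in\Delta(\delta,Q)$, so the standard GALF $P=\prod_i x_i^{p_{ki}}$ remains a GALF on $V_k$ for every admissible perturbation. However, your main route in the second paragraph --- continuing the Morse decomposition to $\widetilde\Gamma\cap S_0$ and re-applying Theorem \ref{maintheorem} to each perturbed system --- is a detour the paper deliberately avoids, and you correctly sense why: there is no canonical persistence of Morse decompositions. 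The paper never constructs any invariant structure for the perturbed flow. It only uses the unperturbed decomposition.

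The genuine gap is in how you obtain the \emph{uniform} coexistence constant $\beta$. Your claim that "the standard GALF sub-level-set argument gives the same $\beta$" does not go through: the GALF inequality on $V_k$ only shows that perturbed interior trajectories must exit $V_k$ (the maximum-of-$P$-along-the-orbit argument from Appendix A1); it says nothing by itself about how far from the full extinction set $S_0$ those trajectories eventually stay, and the passage from "repelled from each $V_k$" to "uniformly bounded away from $S_0$" requires the global ordering of the Morse decomposition, done uniformly in the perturbation. In the unperturbed case this step is Garay's 1989 corollary; in the robust case the paper imports Corollary 4.5 of \citet{Hirsch2001}: the uniform GALF estimate verifies hypothesis \textbf{T2} (every perturbed interior orbit satisfies $\limsup_t d(\widetilde{z}.t, M_k)\geq \delta_0$ for the \emph{unperturbed} $M_k$), and that theorem then outputs a single $\beta>0$ valid for all $(\widetilde f,\widetilde g)\in\Delta(\delta,Q)$. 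Without this (or an equivalent uniform-in-parameters persistence theorem), your argument at best yields permanence of each perturbed system separately, with a bound that could degenerate as the perturbation varies --- which is weaker than robust permanence as defined in Section \ref{section:robust}.
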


To show this, we apply a result from \citet{Hirsch2001} to show permanence of $(\tilde{f},\tilde{g}) \in \Delta(\delta, Q)$ with a uniform lower bound $\beta$.  A proof is given in Appendix \ref{mainproof_pert}. It is worth noting that permanence does not in general imply robust permanence; $\dot{x}=x^2(1-x)$ is permanent but not robustly permanent. \citet{Hofbauer2004} show that robust permanence is not generic among permanent ecological equations. 

\section{Applications}\label{examples}

The main results developed here are applicable to a broad range of internal and external feedbacks. In this section, we discuss permanence in models with external environmental, internal structural and evolutionary feedbacks, which illustrate the utility of the main theorem.  In the first example, we apply our result to show how external environmental fluctuations can enable coexistence amongst competing species in the form of robust permanence.  In the second example, we demonstrate how existing permanence conditions from \citet{Hofbauer2010} for models with internal population structure, i.e., the partitioning of a whole population into distinct types, can be reproduced using our framework. Then we give an example of a sexually-structured population model to which the existing result from \citet{Hofbauer2010} does not apply, emphasizing the utility of our result to structured models.
Finally, in the third example, we apply the result to an example of an ecological model with the evolution of a quantitative trait as the internal feedback, demonstrating how our results apply to models of eco-evolutionary dynamics.
Altogether, these applications highlight how Theorem \ref{maintheorem} unifies some existing permanence results and how it enables us to determine when there is permanence in population models with a variety of feedbacks.  

\subsection{Environmental fluctuations}
Population dynamics are often influenced by time-varying environmental factors, such as seasonal fluctuations in temperature and rain fall or other weather patterns.  When environmental factors influence populations' growth rate, this may affect persistence of the community. Non-autonomous differential equations, with time-varying parameters, are commonly used to account for the temporal changes in growth rates  (e.g. \citet{Zhao2001, Vance1989, Smith2011}). These give the differential equation

\begin{equation}\label{nonauto}
\frac{dx_i}{dt}=x_if_i(x, t) \hspace{7mm} i=1\dots n
\end{equation}

where the per-capita growth rates depend on time.  

Non-autonomous models can be formulated into our model form (\ref{eq:model}) when the environmental factors can be modeled as a solution of an autonomous differential equation $\frac{dy}{dt}=g(y)$. Then (\ref{nonauto}) becomes 

\begin{equation}\label{nonauto_split}
\begin{aligned}
\frac{dx_i}{dt}&=x_if_i(x, y) \hspace{10mm} i=1\dots n \\ 
\frac{dy_j}{dt}&=g_j(y) \hspace{17mm} j=1\dots m
\end{aligned}
\end{equation}

To apply our main theorem, $y$ must remain in a compact set $K\subset \mathbb{R}^m$. Biologically, there is no mutual feedback between $y$ and $x$, which is appropriate when $y$ represents environmental factors, such as weather, that are independent of the population densities. Model (\ref{nonauto_split}) is a special case of a \emph{skew product flow}, which are commonly used for studying non-autonomous flows \citep{Zhao2001, Mierczynski2004}. 
 
To illustrate how our results can be applied to non-autonomous systems, we first prove a general, algebraically verifiable condition for non-autonomous Lotka-Volterra systems where only the ``intrinsic'' per-capita growth rates fluctuate.  Indeed, for these Lotka-Volterra systems permanence conditions are equivalent to an autonomous Lotka-Volterra system with the fluctuating intrinsic rate of growth replaced by an averaged intrinsic rate of growth. When the interaction coefficients fluctuate, however, this simplification is no longer possible. We illustrate verifying our permanence condition in this latter case for a Lotka-Volterra systems with two competing species. 

For the general result, consider a non-autonomous Lotka-Volterra system of the form 
\begin{equation}\label{eq:LV}
\begin{aligned}
\frac{dx}{dt}&= x\circ (A x +b(y))\\
\frac{dy}{dt}&= g(y)
\end{aligned}
\end{equation} 
where $\circ$ denotes component-wise multiplication i.e., the Hadamard product. The matrix $A=(a_{ij})$ corresponds to the matrix of per-capita species interaction strengths and the vector $b(y)$ corresponds to the intrinsic per-capita growth rates as a function of the ``environmental'' state $y$. As $y$ doesn't depend on $x$, we write $y.t$ as the solution of $\frac{dy}{dt}=g(y)$ with initial condition $y \in K$. 

For simplicity, we assume the dynamics of $y$ on $K$ are uniquely ergodic, i.e., there exists a Borel probability measure $\mu$ on $K$ such that 

\[ 
\overline {h} :=\lim_{t\to\infty} \frac{1}{t}\int_0^t h(y.s)ds =\int h(y)\mu(dy)
\] 
for all $\mu$-integrable functions $h:K\to\mathbb{R}$ satisfying $\int |h(x)| \mu(dx)<\infty$.
In particular, let $\overline{b}=(\overline{b_1},\dots, \overline{b_n})$ be the temporal averages of the intrinsic rates of growth. Using these averages, we prove the following two results.

\begin{proposition}\label{LVprop} Assume that \eqref{eq:LV} satisfies assumption \textbf{S2}. If there exist $p_1,\dots,p_n>0$ such that 
\begin{equation}\label{LVpropineq}
\sum_i p_i \left(\sum_j a_{ij} x_j + \overline {b_i}\right)>0
\end{equation}
for any $x\in\mathbb{R}_+^n$ satisfying $\Pi_i x_i=0$ and $\sum_j a_{ij} x_j =-\overline{b_i}$ whenever $x_i>0$, then \eqref{eq:LV} is robustly permanent. 

\end{proposition}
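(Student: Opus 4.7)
The plan is to apply Theorem \ref{maintheorem_pert} by verifying the hypothesis of Theorem \ref{maintheorem} with the weights $p_1, \ldots, p_n$ supplied by the proposition. Assumption \textbf{S2} provides the global attractor $\Gamma$; fix any Morse decomposition $\{M_1, \ldots, M_\ell\}$ of $S_0 \cap \Gamma$ and set $p_{ki} = p_i$ uniformly on every Morse set.

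I would verify the weighted-integral condition by contradiction: suppose that for some $k$ and some $z \in M_k$ one has $\sum_i p_i \int_0^T f_i(z.t)\,dt \le 0$ for every $T > 0$. Since $M_k$ is compact and invariant, the empirical measures $\pi_T := \tfrac{1}{T}\int_0^T \delta_{z.s}\,ds$ admit a weak-$*$ subsequential limit $\pi$ along some $T_n \to \infty$, and $\pi$ is an invariant Borel probability measure supported on $M_k \subseteq S_0 \cap \Gamma$. Dividing the standing inequality by $T_n$ and passing to the limit gives $\int \sum_i p_i f_i\,d\pi \le 0$.

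Next I would invoke the ergodic decomposition $\pi = \int \nu\,dP(\nu)$ and establish that $\sum_i p_i \int f_i\,d\nu > 0$ for $P$-almost every ergodic component $\nu$; integrating against $P$ then produces the contradiction. Set $\bar x_j^\nu := \int x_j\,d\nu$. Since the $y$-flow is uniquely ergodic on $K$ with invariant measure $\mu$, the $y$-marginal of $\nu$ must equal $\mu$, so $\int b_i(y)\,d\nu = \overline{b_i}$ and hence $\int f_i\,d\nu = \sum_j a_{ij}\bar x_j^\nu + \overline{b_i}$. Each coordinate face $\{x_i=0\}$ is invariant, so ergodicity forces $\nu(\{x_i=0\}) \in \{0,1\}$; because $\nu$ is supported on $S_0$, at least one index satisfies $\nu(\{x_i=0\})=1$, giving $\prod_i \bar x_i^\nu = 0$. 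For any index with $\bar x_i^\nu > 0$ we have $\nu(\{x_i>0\})=1$, and the standard Birkhoff computation applied to the identity $\log x_i(T) - \log x_i(0) = \int_0^T f_i(z.s)\,ds$, combined with boundedness of $x_i$ on $\Gamma$, forces $\int f_i\,d\nu = 0$, i.e.\ $\sum_j a_{ij}\bar x_j^\nu = -\overline{b_i}$. Thus $\bar x^\nu$ satisfies the hypotheses of \eqref{LVpropineq}, and we conclude $\sum_i p_i \int f_i\,d\nu = \sum_i p_i\bigl(\sum_j a_{ij}\bar x_j^\nu + \overline{b_i}\bigr) > 0$, the desired contradiction.

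The main technical obstacle is establishing the identity $\int f_i\,d\nu = 0$ whenever $\nu(\{x_i>0\})=1$. The upper bound $\limsup_T (\log x_i(T))/T \le 0$ is immediate from boundedness of $x_i$ on the compact attractor; for the matching lower bound, if instead $\log x_i(T)/T \to c < 0$ on a set of positive $\nu$-measure, then $x_i(T)\to 0$ there, and dominated convergence applied to $\mathbf{1}_{\{x_i>\varepsilon\}}(z.T)$ together with invariance of $\nu$ would force $\nu(\{x_i>\varepsilon\})=0$ for every $\varepsilon>0$, contradicting $\nu(\{x_i>0\})=1$. Once this identity is in hand, the proof reduces to a clean combination of Krylov--Bogolyubov, ergodic decomposition, and the algebraic inequality \eqref{LVpropineq}.
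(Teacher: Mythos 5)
Your argument is correct and is, at its core, the same measure-theoretic reduction the paper uses: both proofs boil down to checking the weighted-growth condition against ergodic measures on $S_0\cap\Gamma$ (the paper takes the trivial Morse decomposition $\{\Gamma\cap S_0\}$, exactly as your uniform choice of weights effectively does), both use unique ergodicity of the $y$-flow to replace $\int b_i(y)\,d\nu$ by $\overline{b_i}$, and both conclude that the ergodic average $\bar x^\nu$ is a point to which the hypothesis \eqref{LVpropineq} applies. The difference is one of packaging: the paper invokes Proposition \ref{ergeq} (the equivalence of the pointwise time-average condition with positivity against all invariant/ergodic measures, citing \citet{Garay2003}) and cites Theorem 5.2.3 of \citet{Hofbauer1998} for the statement that time averages of $x_j$ along an orbit in a face converge to a face equilibrium, whereas you re-derive both ingredients from scratch -- the first via Krylov--Bogolyubov plus ergodic decomposition in a contradiction argument, the second via the standard observation that $\int f_i\,d\nu=0$ whenever $\nu(\{x_i>0\})=1$, using $\frac{1}{T}\log x_i(z.T)\to 0$. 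Your version is self-contained and, incidentally, slightly more robust: you never need convergence of the time averages of $x_j$ themselves, nor the uniqueness of the face equilibrium that the paper's citation implicitly asserts -- you only need that the first moments of each ergodic component satisfy the linear relations $\sum_j a_{ij}\bar x_j^\nu=-\overline{b_i}$ for the indices with $\bar x_i^\nu>0$, which is precisely what the hypothesis requires. The paper's version buys brevity by outsourcing these steps to the cited results.
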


The proof of this proposition is in Appendix \ref{prop_proof}.

\begin{proposition}\label{LVprop2}
If there is no $x$ such that $\sum_j a_{ij} \overline{x}_j = -\overline{b_i}$ with $x_i>0$ for all $i$, then $ \omega(z)\subset S_0$ for all $z \in S\backslash S_0$. 
\end{proposition}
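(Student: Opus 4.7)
The plan is a proof by contradiction. Suppose some $z_0 \in S\setminus S_0$ has $\omega(z_0) \not\subset S_0$; I aim to produce an interior $\hat x \in \mathbb{R}_+^n$ with $\hat x_i > 0$ for each $i$ satisfying $\sum_j a_{ij}\hat x_j = -\overline{b_i}$, contradicting the hypothesis. First I would pick $w = (x^*, y^*) \in \omega(z_0)$ with $x^*_i > 0$ for every $i$, and extract times $T_n \to \infty$ along which $z_0.T_n \to w$. Because each $x^*_i > 0$, the endpoint logarithms $\log x_i(z_0.T_n)$ remain bounded.

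The per-capita growth identity $\frac{d}{dt}\log x_i(z_0.t) = \sum_j a_{ij} x_j(z_0.t) + b_i(y(z_0.t))$, integrated over $[0,T_n]$ and divided by $T_n$, gives
\[
\frac{\log x_i(z_0.T_n) - \log x_i(z_0)}{T_n} = \sum_j a_{ij}\!\left(\frac{1}{T_n}\!\int_0^{T_n}\! x_j(z_0.s)\,ds\right) + \frac{1}{T_n}\!\int_0^{T_n}\! b_i(y(z_0.s))\,ds.
\]
The left-hand side tends to $0$ since $\log x_i(z_0.T_n) \to \log x^*_i$ is finite; unique ergodicity of the $y$-flow on $K$ drives the last term to $\overline{b_i}$; and Bolzano--Weierstrass, using boundedness of $x_j$ on $Q$, lets me extract a further subsequence along which $(1/T_n)\int_0^{T_n} x_j(z_0.s)\,ds \to \hat x_j \ge 0$. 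Passing to the limit yields $\sum_j a_{ij}\hat x_j + \overline{b_i} = 0$ for each $i$.

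What remains, and what I expect to be the main obstacle, is to establish $\hat x_i > 0$ for every $i$. The natural device is to view the time averages as moments of an invariant probability measure: along a further subsequence, the empirical measures $\nu_{T_n} := (1/T_n)\int_0^{T_n} \delta_{z_0.s}\,ds$ converge weakly to an invariant measure $\mu^*$ on $\omega(z_0)$, with $\hat x_j = \int x_j\,d\mu^*$. Strict positivity is then equivalent to $\mu^*$ placing positive mass off $S_0$. Because $z_0.T_n \to w$ strictly in the interior and $|d\log x_i/dt|$ is uniformly bounded on $Q$, each visit of the orbit to a small neighborhood of $w$ persists for a uniform positive time; combined with the ergodic decomposition of $\mu^*$ and unique ergodicity on $K$ (so that the $y$-marginal of any ergodic component is $\mu$), this is what I would use to rule out $\mu^*$ being entirely concentrated on $S_0$. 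The cocycle identity, unique ergodicity, and Bolzano--Weierstrass handle every other step routinely; controlling the sojourn structure near the recurrent interior point $w$ is the crux.
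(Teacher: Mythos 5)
Your computation up to $\sum_j a_{ij}\hat x_j + \overline{b_i}=0$ with $\hat x\ge 0$ is fine, but the step you yourself flag as the crux --- showing $\hat x_i>0$ for every $i$ --- is a genuine gap, and the device you sketch does not close it. A uniform positive sojourn time per visit to a neighborhood of the interior point $w$ does not give a positive asymptotic \emph{fraction} of time there: the gaps between visits can grow, so the empirical measures $\nu_{T_n}$ can converge to an invariant measure $\mu^*$ concentrated entirely on $\omega(z_0)\cap S_0$ even though $\omega(z_0)$ contains interior points (and since $\omega(z_0)\cap S_0$ is itself compact and invariant, nothing prevents $\mu^*$ from living there). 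In that case $\hat x$ is a boundary solution of $A\hat x=-\overline{b}$, which the hypothesis does not exclude, and no contradiction results. Note that your argument \emph{does} close if you strengthen the contradiction hypothesis to $\omega(z_0)\subset S\setminus S_0$: then $\omega(z_0)$ is a compact set on which each $x_j$ is bounded below by some $\epsilon_j>0$, so the time averages $\hat x_j$ are automatically positive. But that only proves $\omega(z)\cap S_0\neq\emptyset$ for every $z$, and you would still need an extra step (the paper invokes the Zubov--Ura--Kimura theorem) to upgrade this to the stated conclusion $\omega(z)\subset S_0$.

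The paper avoids the positivity problem entirely by using the hypothesis through a theorem of the alternative rather than through limits of time averages: following Theorem 5.1.2 of Hofbauer and Sigmund, the nonexistence of an interior solution of $Ax=-\overline{b}$ yields a vector $p$ (not necessarily with positive entries) such that $\sum_i p_i\bigl(\sum_j a_{ij}x_j+\overline{b_i}\bigr)>0$ for \emph{all} $x\in\mathbb{R}_+^n$. Then $V(z)=\sum_i p_i\log x_i$ has time-averaged derivative bounded below by some $\beta>0$ along orbits (using unique ergodicity of $y$ exactly as you do for the $b_i$ term), which is incompatible with $V$ attaining a maximum on a compact $\omega$-limit set contained in $S\setminus S_0$; Zubov--Ura--Kimura then finishes. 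If you want to salvage your route, you should either import that separating-vector lemma or restructure the contradiction as above and add the Zubov--Ura--Kimura step.
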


\begin{proof}
Following the proof of Theorem 5.1.2 in \citet{Hofbauer1998}, there exists a $p$ such that $\sum_i p_i(\sum_j a_{ij}x_j+ \overline{b_i})>0$ for all $x\in \mathbb{R}_+^n$.  Let $V(z)=\sum_i p_i \log(x_i)$ for all $z=(x,y)\in S\backslash S_0$.   Then $\frac{dV}{dt}=\sum_i p_i(\sum_j a_{ij}x_j(t)+ b_i)$.  Now, suppose there is a $z \in S\backslash S_0$ with $\omega(z)\subset S\backslash S_0$.  Then, by compactness, there is a $z^* \in \omega(z)$ such that $V(z)$ is maximized on $\omega(z)$. Also, by compactness, for all $z \in \omega(z)$,$ \overline{\frac{dV}{dt}}=\lim_{s\rightarrow\infty} \frac{1}{s} \int_0^s \frac{dV}{dt} ds >\beta$, for some $\beta>0$ but this contradicts the existence of a maximum.  It follows that for all $z\in S\backslash S_0$, $\omega(z)\not \subset S\backslash S_0$ and $\omega(z)\cap S_0 \neq \emptyset$. Then, by the Zubov-Ura-Kimura theorem~\citep{Garay2003}, $\omega(z)\subset S_0$.
\end{proof}

Hence, when environmental variation drives fluctuations in intrinsic growth rates, their effects can be averaged in time to determine permanence. On the contrary, if interaction coefficients fluctuate, then permanence may hold, even if predictions from averaging these coefficients in time suggests otherwise. 

To demonstrate this explicitly, we consider with a modified version of the autonomous model from \citet{Volterra1928} of two species competing for a single limiting resource.  
Let $x_1$ and $x_2$ be the densities of two species competing for a limited resource, $R$.  Suppose the death rate and resource use of species $i$ depend on a changing environmental state $y$ so that the intrinsic death rate $d_i(y)$ and the interaction coefficients $a_i(y)$ are functions of $y$. 
The model from \citet{Volterra1928} becomes

\begin{equation}\label{comp2}
\begin{aligned}
\frac{dx_1}{dt}&= x_1(ca_1(y)R - d_1(y))\\
\frac{dx_2}{dt}&= x_2(ca_2(y)R - d_2(y))\\
\frac{dy}{dt}&= g(y)\\
R&=\max\{J- a_1(y)x_1 - a_2(y)x_2, 0\}\\
\end{aligned}
\end{equation}
where $c$ is the efficiency with which both species convert the resource into new individuals and $J$ is the maximum amount of resource available and this is instantly reduced by the competitors. 
We assume that the dynamics of $y$ are uniquely ergodic on a compact set $K$. This model is appropriate for species in which resource use or death rate change with the seasons or a fluctuating environment.  

In the constant environment model ($g(y)=0$), \citet{Volterra1928} showed that if $\frac{d_1(y)}{a_1(y)} < \frac{d_2(y)}{a_2(y)}<Jc$, species $1$ will exclude species $2$: $\lim_{t\to\infty}x_2(t)=0$ for any initial condition $z=(x_1, x_2, y)$ satisfying $x_1x_2>0$.  This is commonly referred to in the ecological literature as the $R^*$ rule \citep{Tilman1980} and is a mathematical formulation of the competitive exclusion principle, which asserts that two competing species for the same resource cannot coexist, if other ecological factors are constant \citep{Gause1934, Hardin1960}.

Environmental fluctuations that lead to time-varying parameters might affect the coexistence of two species competing for the same resource.  Proposition \ref{LVprop2} implies that if only the per-capita death rates vary, then the competitive exclusion principle still holds.  However, when the resource use rates vary coexistence is possible.
Specifically, suppose that species $i$ uses the resource at a maximal rate for some compact subset of environmental states $K_i\subset K$ so that $a_i(y)=1$ and $a_j(y)=0$ for $y\in K_i, i\neq j$.  To allow for temporal partitioning of resource use, we assume that these sets of environmental states are disjoint i.e. $K_1 \cap K_2=\emptyset$. Let $k_i=\lim_{t\rightarrow\infty}\frac{1}{t} \int_0^t \mathbbm{1}_{K_i}(y.s)ds$ be the average time spent in environmental state $K_i$, where $\mathbbm{1}_{K_i}: K\rightarrow \mathbb{R}$ is the indicator function with $\mathbbm{1}_{K_i}(y)=1$ for $y\in K_i$ and $0$ otherwise. Furthermore, assume that $d_i(y)>\epsilon$ for some $\epsilon>0$ and for all $y$ and $i=1,2$. For example, this might model the dynamics of winter annual plants in the Sonoran desert that use water following winter rains, while summer annuals tend to do so during summer \citep{Smith1997}. 

\begin{theorem}\label{th_case2}
If $cJk_i>\overline{d_i}$ for $i=1,2$, then (\ref{comp2}) is robustly permanent.
\end{theorem}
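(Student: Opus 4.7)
The plan is to verify the hypotheses of Theorem \ref{maintheorem_pert} by exhibiting a Morse decomposition of $S_0 \cap \Gamma$ together with positive weights $p_{ki}$ for which the integral inequality of Theorem \ref{maintheorem} holds on each Morse set. The natural decomposition to try is $\mathcal{M} = \{M_0, M_1, M_2\}$, where $M_0 = \{(0,0)\} \times K$ is the both-species-absent face, and $M_1$ (respectively $M_2$) is the global attractor of the single-species skew-product on the face $\{x_2=0,\, x_1>0\}$ (respectively $\{x_1=0,\, x_2>0\}$). Existence, compactness, and isolation of $M_1$ and $M_2$ with the resident species uniformly bounded away from $0$ should follow from the fact that the non-autonomous logistic-type equation on the respective face has an attracting positive orbit precisely when the average intrinsic growth rate $cJk_i - \overline{d_i}$ is positive, which is our hypothesis.

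Next, I would verify the integral inequality on each Morse set, leveraging unique ergodicity of the $y$-dynamics. On $M_0$, the per-capita growth rates reduce to $f_i(0,0,y) = ca_i(y)J - d_i(y)$. Since $a_i \geq 0$ and $a_i = 1$ on $K_i$, we have $a_i(y) \geq \mathbbm{1}_{K_i}(y)$, and unique ergodicity gives $\frac{1}{T}\int_0^T f_i(0,0,y.t)\,dt \to cJ\overline{a_i} - \overline{d_i} \geq cJk_i - \overline{d_i} > 0$; taking $p_{01}=p_{02}=1$ and $T_z$ large makes the sum positive. On $M_1$, the key observation is a pointwise lower bound for the missing species: $f_2(x_1, 0, y) \geq cJ\,\mathbbm{1}_{K_2}(y) - d_2(y)$, because on $K_2$ one has $a_1(y)=0$ and $a_2(y)=1$ so $R=J$ and $f_2 = cJ - d_2(y)$, while elsewhere $ca_2(y)\max\{J - a_1(y)x_1,\, 0\} \geq 0$. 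Hence the time average of $f_2$ along any trajectory in $M_1$ is at least $cJk_2 - \overline{d_2} > 0$. For the resident species $1$, $\int_0^T f_1(z.t)\,dt = \log x_1(T) - \log x_1(0)$ is uniformly bounded on the compact set $M_1$. Choosing $p_{11} = p_{12} = 1$ and $T_z$ large enough for the $f_2$ contribution to dominate the bounded $f_1$ contribution yields the required positivity; the case $M_2$ is symmetric.

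Theorem \ref{maintheorem_pert} then delivers robust permanence of \eqref{comp2}. The main subtlety I expect lies in the first step: rigorously verifying that $\{M_0, M_1, M_2\}$ is a Morse decomposition of $S_0 \cap \Gamma$, which requires showing that each single-species skew-product admits an attracting positive invariant set and that every trajectory on the extinction set connects these sets in the appropriate order. A secondary point is the regularity of $\mathbbm{1}_{K_i}$ when applying unique ergodicity; this is handled provided $K_i$ is a $\mu$-continuity set, and in any case the decoupling of $y$-dynamics from $x$ ensures that the relevant lower bound depends only on $y$, so the time average converges uniformly in the initial condition, giving a single choice of weights that works across each Morse set.
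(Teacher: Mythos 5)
Your proposal is correct and follows essentially the same route as the paper: the same Morse decomposition (the origin face together with the two single-species attractors $A_1,A_2$ obtained by first applying Theorem \ref{maintheorem} to each single-species subsystem), the same weights $p=(1,1)$, the pointwise bound $f_j \geq cJ\,\mathbbm{1}_{K_j}(y)-d_j(y)$ for the missing species, and the observation that the resident's time-averaged per-capita growth rate vanishes on its attractor. The only cosmetic issue is that your indexing puts the origin first, whereas the definition of a Morse decomposition requires it to carry the highest index since it is the $\alpha$-limit of trajectories on the single-species faces.
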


\begin{proof}
First, note that (\ref{comp2}) satisfies \textbf{S2} with $Q=\{[0,\frac{cJ^2}{\epsilon}] \times K\}$, as $\frac{dx_i}{dt}<0$ whenever $x_i>\frac{cJ^2}{\epsilon}.$
 
Next, we show that each species persists on its own when the other species is absent. Consider the single species $i$ model 

\begin{equation}\label{comp_sub}
\begin{aligned}
\frac{dx_i}{dt}&= x_i(ca_i(y)(\max\{J-a_i(y)x_i,0\}) - d_i(y))\\
\frac{dy}{dt}&=g(y)\\
\end{aligned}
\end{equation}

on $S^i=\mathbb{R}_+ \times K$ with extinction set $S^i_0=\{0\}\times K$.  $\mathcal{M}=\{S^i_0\}$ is a Morse decomposition of $\Gamma_i \cap S^i_0$, where $\Gamma_i$ is the global attractor for (\ref{comp_sub}). Then, for all $z \in S^i_0$, 

$$
\lim_{t\rightarrow \infty} \frac{1}{t} \int_0^t f_i(0,y.s)ds=\lim_{t\rightarrow \infty} \frac{1}{t} \int_0^t (ca_i(y.s)J-d_i(y.s)) ds > cJk_i  -\bar{d_i} > 0
$$

By Theorem \ref{maintheorem}, (\ref{comp_sub}) is permanent.  Let $A_i\subset S^i\backslash S_0^i$ be the attractor in $\Gamma_i$.

Now, consider (\ref{comp2}).  Let $M_3=\{0\} \times \{0\} \times \{K\}$ and $M_i=A_i$ for $i=1,2$.  Then, $\mathcal{M}=\{M_3, M_2, M_1\}$ is a Morse decomposition of $S_0 \cap \Gamma$, where $\Gamma$ is a global attractor for (\ref{comp2}). With $p=(1,1)$, the inequality in Theorem \ref{maintheorem} is satisfied for Morse set $M_3$.  For $i=1, 2$,  

\[
\lim_{t\rightarrow \infty} \frac{1}{t} \int_0^t f_i(z.s)ds =0 
\]
and
\[
\lim_{t\rightarrow \infty} \frac{1}{t} \int_0^t f_j(z.s)ds > cJk_1 - \overline{d_1}>0
\]
 
for $j\neq i,$ for all $z\in M_i$.  Then $p$ satisfies the inequality in Theorem \ref{maintheorem} for $M_i$.  Finally, by Theorem \ref{maintheorem_pert}, (\ref{comp2}) is robustly permanent. 
\end{proof}

This result implies that even if species $1$ is on average a stronger resource competitor, i.e., $\frac{\bar{d_1}}{c\bar{a}_1} < \frac{\bar{d_2}}{c\bar{a}_2}$, it may not always exclude species $2$.  Temporal differences in resource use enable weaker competitors to coexist with stronger competitors.  The condition in Theorem \ref{th_case2} suggests that when per-capita death rates are high, the species needs a longer time period to maximally acquire the resource to ensure permanence.  Furthermore, the more resource that is available (greater $J$), the shorter this time period can be, all else being equal. This is an example of the storage effect mechanism of coexistence: species have different environmental time periods that are good for growth and are able to survive through time periods bad for growth~\citep{Chesson1981, Chesson1994}.

\subsection{Structured populations}
Individual variation that gives rise to intraspecific differences in demographic rates and species interactions can alter community dynamics and hence, persistence \citep{Moll2008, Bolnick2011, Fujiwara2011, Leeuwen2014}.  One form of structured population models account for this individual variation by partitioning  populations into discrete types, e.g. size classes, spatial location, and gender. 
For example, \citet{Hofbauer2010} considered models of interacting, structured populations of the form 
\begin{equation}\label{eq:structured}
\frac{du_i}{dt} = A^i(u) u_i
\end{equation}
where $u_i=(u_{i1},\dots,u_{im_i})$ is a vector of densities for the $m_i\geq 1$ subpopulations of species $i$, $u=(u_1,u_2,\dots,u_n)$ is the state of the entire community, and $A^i(u)=(a_{jk}^i(u))_{j,k}$ are $n_i\times n_i$ matrices with non-negative off-diagonal entries and the sign structure of an irreducible matrix that only depends on $i$. First, we show how our result reproduces a previous result from \citet{Hofbauer2010} for permanence in structured population models. Second, through a sexually-structured model, we illustrate how our result applies to models that prior results do not.

\subsubsection{Reproduce results from \citet{Hofbauer2010}}
Assume that the semi-flow defined by equation~\eqref{eq:structured}, with solutions $u.t$ for initial condition $u$, has a global attractor $\Gamma$. 
To characterize robust permanence of these equations, \citet{Hofbauer2010} used dominant Lyapunov exponents that characterize the long-term growth rates of each of the species. To define the exponents for species $i$, consider the linear skew product flow on
$\Gamma \times \mathbb{R}^{m_i}$ defined by $(u.t, v.t)=(u.t ,B_i(t,u)v)$ where
$Y(t)=B_i(t,u)$ is the solution to $Y' (t)= A_i(u.t) Y(t)$ with
$Y(0)$ equal to the identity matrix. 
The assumption that $A_i$ is irreducible with non-negative off diagonal entries implies that $B_i(t,u)\mathbb{R}_+^{m_i}\subset \mathbb{R}_+^{m_i}$ for all $x$ and $t>0$ (see, e.g., \cite{Smith1995}). \citet[Prop.3.2]{Ruelle1979} provides a non-autonomous form of the Perron-Frobenius Theorem: there exist continuous maps $v_i,w_i:\Gamma \to \mathbb{R}_+^{m_i}$ with
$\|v_i(u)\|=\|w_i(u)\|=1$, where $\|v\|=\sum_i|v_i|$, such that
\begin{itemize}
\item The line bundle $E_i(u)$ spanned by $v_i(u)$ is invariant, i.e.,
$E_i(u.t)=B_i(u,t)E_i(u)$ for all $t\geq 0$.
\item The vector bundle $F_i(u)$ perpendicular to $w_i(u)$ is invariant
i.e., $F_i(u.t)=B_i(u,t)F_i(u)$ for all $t\geq 0$.
\item There exist constants $\alpha>0$ and $\beta>0$ such that
\begin{equation}\label{perron}
\|B_i(t,u)|F_i(u)\| \leq \alpha \exp(-\beta t) \|B_i(t,u)|E_i(u)\|
\end{equation}
for all $u\in \Gamma$ and $t\geq 0$.
\end{itemize}
In light of \eqref{perron}, $v_i(u)$ can be viewed as the community state-dependent ``stable stage distribution'' of species $i$ for the linearized dynamics given by $Y'(t)=A_i(u.t)Y(t)$. Specifically, \eqref{perron} implies that for any  $\tilde v\in\mathbb{R}_+^{m_i}\setminus\{0\}$, $Y(t)\tilde v/\|Y(t) \tilde v\|-v(u.t)$ converges to zero at $t\to\infty$. Similarly, $w_i(u)$ can be interpreted as the community state-dependent vector of ``reproductive values'' for the stages of species $i$.  Stages with larger entries in $w_i(u)$ contribute more to the long-term growth rate of species $i$. 

\citet{Hofbauer2010} defined the average per-capita growth rate of species $i$ given the initial community state $u$ as 
\[
r_i(u)=\limsup_{t\to\infty} \frac{1}{t}\int_0^t  w_i(u.s)^T A_i(u.s) v_i(u.s)\,ds
\] 
where $w^T$ denotes the transpose of a vector $w$. We derive the following theorem of \citet{Hofbauer2010} as a corollary of Theorem~\ref{maintheorem}. 

\begin{theorem}\label{thm:hofsch}
Let $\{M_1,\dots, M_\ell\}$ be a Morse decomposition for $S_0\cap \Gamma$. If for each $M_k$ there exists $p_{k1},\dots,p_{kn}>0$ such that 
\begin{equation}\label{eq:hofsch}
\sum_i p_{ki} r_i (u)>0
\end{equation}
for all $u\in M_k$, then system~\eqref{eq:structured} is robustly permanent.\end{theorem}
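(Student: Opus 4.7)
The plan is to realize~\eqref{eq:structured} as an instance of~\eqref{eq:model} and then invoke Theorem~\ref{maintheorem_pert}. For each species $i$ I would introduce the total density $x_i = \sum_j u_{ij}$ and the stage distribution $y_i = u_i/x_i$ on the compact simplex $\Sigma_i = \{y \in \mathbb{R}_+^{m_i} : \sum_j y_j = 1\}$, extending $y_i$ to evolve by the natural renormalized flow on $\{x_i = 0\}$. Writing $u_i = x_i y_i$, a direct calculation yields
\[
\dot x_i = x_i\, \mathbf{1}^T A^i(u)\, y_i, \qquad \dot y_i = A^i(u)\, y_i - y_i\, \mathbf{1}^T A^i(u)\, y_i,
\]
which places the system in the form~\eqref{eq:model} with state space $\mathbb{R}_+^n \times K$, $K = \prod_i \Sigma_i$, and per-capita rates $f_i(x,y) = \mathbf{1}^T A^i(u) y_i$. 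The standing assumptions S1 and S2 follow respectively from the regularity of the $A^i$ and from the hypothesized global attractor $\Gamma$ of~\eqref{eq:structured}.

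Next I would identify the global attractor $\Gamma'$ of the transformed system and lift the given Morse decomposition. By the non-autonomous Perron--Frobenius property~\eqref{perron}, $y_i(t) - v_i(u(t)) \to 0$ exponentially along forward orbits, regardless of whether $x_i$ is positive or zero. Consequently $\Gamma' = \{(x, v(u)) : u \in \Gamma\}$ with $v(u) = (v_1(u),\dots,v_n(u))$, and the map $u \mapsto (x, v(u))$ is a continuous conjugacy between the flows on $\Gamma$ and $\Gamma'$. Under this conjugacy the Morse decomposition $\{M_1, \ldots, M_\ell\}$ of $S_0 \cap \Gamma$ lifts to a Morse decomposition $\{\widetilde M_1, \ldots, \widetilde M_\ell\}$ of $S_0 \cap \Gamma'$. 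Crucially, for any $z \in \widetilde M_k$ with projection $u \in M_k$, one has $f_i(z.s) = \mathbf{1}^T A^i(u.s) v_i(u.s)$, which is the instantaneous growth rate of the $\ell^1$-norm along the dominant direction of the cocycle $Y'(t) = A^i(u.t) Y(t)$. Its long-time average therefore equals the top Lyapunov exponent of the cocycle and, by the $w_i$-weighted characterization given by~\eqref{perron}, coincides with $r_i(u)$, yielding
\[
\limsup_{t \to \infty}\frac{1}{t}\int_0^t f_i(z.s)\, ds \;=\; r_i(u).
\]

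With this identification, the hypothesis $\sum_i p_{ki} r_i(u) > 0$ on each $M_k$ directly produces, for every $z \in \widetilde M_k$, a time $T_z$ with $\sum_i p_{ki} \int_0^{T_z} f_i(z.s)\, ds > 0$, which is exactly the hypothesis of Theorem~\ref{maintheorem_pert}. Applying that theorem gives robust permanence of the transformed system, and since the change of variables is a homeomorphism on $\{\prod_i x_i > 0\}$ and maps small perturbations of~\eqref{eq:structured} to small perturbations of the transformed system, robust permanence transfers back to~\eqref{eq:structured}. The main technical obstacle I expect is the second step: carefully verifying that $\Gamma'$ has the claimed graph-over-$\Gamma$ structure uniformly on the extinction portion (where $x_i = 0$ but $y_i$ still has nontrivial simplex dynamics), and checking that a perturbation of \eqref{eq:structured} preserving the sign structure of $A^i$ induces a bona fide $(\delta', Q')$-perturbation in the sense of R1--R3, despite the singular coordinate change along $x_i = 0$.
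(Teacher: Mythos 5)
Your proposal follows essentially the same route as the paper's proof: the identical change of variables to total densities $x_i=\sum_j u_{ij}$ and stage frequencies $y_i=u_i/x_i$, transferring the Morse decomposition to the new coordinates, identifying $r_i(u)$ with the long-run time average of $f_i$ along the transformed orbit, and invoking Theorem~\ref{maintheorem} together with Theorem~\ref{maintheorem_pert}. The only substantive difference is that where you justify the identity between the time average and $r_i(u)$ via the Lyapunov exponent of the cocycle (and spell out the graph-over-$\Gamma$ structure of the attractor, a point the paper leaves implicit), the paper simply cites Proposition~1 of \citet{Hofbauer2010}.
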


\begin{proof}
To prove Theorem~\ref{thm:hofsch} using our framework, we introduce the following change of variables:
\[
x_i = \sum_j u_{ij} \mbox{ and } y_{ij}=u_{ij}/x_i.
\]
In this coordinate system, equation~\eqref{eq:structured} becomes
\begin{eqnarray}\label{eq:structured2}
\nonumber\frac{dx_i}{dt}&=& x_i \sum_{j,k} b_{jk}^i(x,y) y_{ik} =: x_i f_i(x,y) \mbox{ where }b_{jk}^i(x,y)= a_{jk}^i(u)\\
\frac{dy_{ij}}{dt}&=& \left( \sum_k b_{jk}^i(x,y) y_{ik} -  y_{ij}f_i(x,y)\right)=:g_{ij}(x,y).
\end{eqnarray}
The state space for equation \eqref{eq:structured2} is $\tilde S=\mathbb{R}^n_+\times \Delta_{m_1} \times \dots \Delta_{m_n}$ where $\Delta_k = \{ y\in \mathbb{R}^k_+ : \sum_j y_j =1\}$ is the $k-1$ dimensional simplex. Let $\tilde \Gamma \subset \tilde S$ and $\{\tilde M_k \}_{k =1}^\ell$ be the global attractor $\Gamma$ and the Morse decomposition $\{M_k \}_{k =1}^\ell$, respectively, of equation~\eqref{eq:structured} in this coordinate system. 

Fix  an element $\tilde M_k$ of the Morse decomposition and $z=(x,y)\in \tilde M_k$.  Let $u$ be $z$ in the original coordinate system. Proposition 1 of \citet{Hofbauer2010} implies that 
\[
r_i(u)=\liminf_{t\to\infty}\frac{1}{t}\int_0^t f_i(z.s)ds.
\]
By the assumption of the theorem statement, 
\[
\sum_i p_i r_i(u)>0.
\]
Hence, we can choose $T_z>0$ such that 
\[
\sum_i p_i \int_0^{T_z} f_i(z.s)ds >0.
\]
Applying Theorem~\ref{maintheorem} completes the proof. 
\end{proof}

The change of variables from (\ref{eq:structured}) to (\ref{eq:structured2}) demonstrates how structured populations can be reformulated into our general framework and reproduce results from \citet{Hofbauer2010}.

\subsubsection{Sexually structured populations}
Our main permanence result applies to structured models that \citet{Hofbauer2010} does not. In particular, permanence results from \citet{Hofbauer2010} do not apply to models in which growth depends on the frequency of types in the populations. 

As an example, we consider a rock-paper-scissors three-species competition model, in which each species is sexually-structured such that reproduction depends on the frequencies of males and females.  Let $m_i$ be the density of males and $f_i$ the density of females for species $i$. Following \citet{Caswell1986}, we assume that there is a harmonic mating function in which case the rate at which females and males are produced (assuming a 50-50 primary sex-ratio) is 
\[
b \frac{m_if_i}{f_i+m_i}
\]
where $2b$ is the per-capita birth rate of mated females, which is species-independent. Assume also that mortality is species-independent but sex-specific, with $d_m$ and $d_f$ as the per-capita, density-independent mortality rates of males and females, respectively. To account for intra- and inter-specific density-dependent feedbacks due to competition,  let $a_{ij}$ be the strength of the competitive effect of species $j$ on species $i$. For simplicity, we assume these density-dependent effects are not sex-specific. However, the model can be easily modified to account for these sex-specific feedbacks. Under these assumptions, the model is

\begin{equation}\label{sex_model}
\begin{aligned}
\frac{df_i}{dt}&=f_i \left( b  \frac{m_i }{f_i+m_i} - d_f - \sum_j a_{ij}(m_j+f_j)\right)\\
\frac{dm_i}{dt}&=m_i \left( b  \frac{f_i }{f_i+m_i} - d_m - \sum_j a_{ij}(m_j+f_j)\right)\\
i&=1,2,3
\end{aligned}
\end{equation}

To ensure each species can persist in the absence of the others, we assume that $b>d_m+d_f$.  To account for rock-paper-scissors competitive dynamics, we assume the interaction terms $a_{ij}$ are given by 

\[
A=a+\begin{pmatrix}
0&\beta&-\alpha\\
-\alpha& 0& \beta\\
\beta&-\alpha&0
\end{pmatrix}
\]
where $a,\alpha,\beta$ are all positive and $\alpha<a$.  

Due to the frequency dependent terms, this model does not satisfy the continuity assumptions of \citet{Hofbauer2010} and, consequently, their results can not be applied directly to study permanence of these equations. However, through the change of variables, 
\[
x_i =m_i+f_i \mbox{ and } y_i =\frac{f_i}{x_i}
\]

Equation (\ref{sex_model}) transforms to

\begin{equation}\label{sex_model2}
\begin{aligned}
\frac{dx_i}{dt}&= x_i \left( 2b y_i(1-y_i) - d_f y_i - d_m (1-y_i) -\sum_j a_{ij} x_j\right)\\
\frac{dy_i}{dt}&=y_i(1-y_i) ( b+d_m-d_f -2b y_i) \hspace{50mm} i=1,2,3
\end{aligned}
\end{equation}

and our permanence theorem applies to prove 

\begin{theorem}\label{sex_theorem}
Under these assumptions, if $\alpha>\beta$, then (\ref{sex_model2}) is permanent in $\mathbb{R}_+^3 \times (0,1)^3$. Conversely, if $\alpha<\beta$, then (\ref{sex_model2}) is not permanent.
\end{theorem}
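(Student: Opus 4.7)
The plan is to exploit the decoupled structure of the $y$-dynamics to reduce (\ref{sex_model2}) to a classical Lotka-Volterra rock-paper-scissors system for $x$, then apply Theorem \ref{maintheorem_pert} for the forward direction and a standard heteroclinic-cycle argument for the converse. Each $y_i$ satisfies the autonomous one-dimensional equation $\dot y_i = y_i(1-y_i)(b+d_m-d_f-2by_i)$, independent of $x$ and of the other $y_j$. The assumption $b>d_m+d_f$ places the interior root $y^*:=(b+d_m-d_f)/(2b)$ inside $(0,1)$, and a phase-line analysis gives $y_i(t)\to y^*$ for every $y_i(0)\in(0,1)$. Consequently the global attractor $\Gamma$ in $\mathbb{R}_+^3\times(0,1)^3$ is confined to the slice $\mathbb{R}_+^3\times\{(y^*,y^*,y^*)\}$, on which the $x$-equations collapse to
\[
\dot x_i = x_i\Bigl(r - \sum_{j} a_{ij}x_j\Bigr),\qquad r := \tfrac12(b-d_m-d_f)>0,
\]
and assumption \textbf{S2} on a suitable compact absorbing set follows from the self-regulation $a_{ii}=a>0$ bounding $x_i$ together with the uniform convergence of $y$ to $(y^*,y^*,y^*)$.

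On $S_0\cap\Gamma$ the Morse decomposition I would use is $\mathcal M=\{M_1,M_2\}$, where $M_2=\{0\}\times\{(y^*,y^*,y^*)\}$ is the totally extinct equilibrium and $M_1$ is the rock-paper-scissors heteroclinic cycle consisting of the three vertex equilibria $e_i$ (with $x_i=r/a$, $x_j=0$ for $j\neq i$, and $y=(y^*,y^*,y^*)$) together with the three connecting orbits. A direct invasion-rate calculation at $e_i$ yields $r\alpha/a$ for the cyclic successor and $-r\beta/a$ for the cyclic predecessor (indices mod $3$), which confirms that the face orbits really do form the cycle $e_1\to e_2\to e_3\to e_1$ and that $M_1$ is a single chain-recurrent isolated invariant set. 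With weights $p=(1,1,1)$ one computes $\sum_i f_i = 3r > 0$ at the origin and $\sum_i f_i = 0 + r\alpha/a - r\beta/a = r(\alpha-\beta)/a$ at each vertex.

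For the permanence direction $\alpha>\beta$, the time-average $\frac{1}{T}\int_0^T\sum_i f_i(z.s)\,ds$ along any non-vertex orbit in $M_1$ tends to $r(\alpha-\beta)/a>0$, since each such orbit approaches a vertex, so some finite $T_z$ witnesses strict positivity of the unnormalized integral. The hypothesis of Theorem \ref{maintheorem_pert} is thereby verified at both $M_1$ and $M_2$, yielding robust permanence. For the converse $\alpha<\beta$, Corollary \ref{main_cor3} cannot be applied naively because $\sum p_i f_i > 0$ at the origin for every positive $p$; instead I would invoke the classical Hofbauer-Sigmund criterion that a heteroclinic cycle whose sum of invasion rates is negative attracts nearby interior orbits, forcing $\liminf_{t\to\infty}x_i(t)=0$ for interior initial data close to $M_1$ and hence precluding permanence. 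The subtlest point will be confirming that $M_1$ really is an isolated chain-recurrent invariant set in the full six-dimensional phase space rather than only in the reduced planar slice; this follows from $\Gamma\subset\mathbb R_+^3\times\{(y^*,y^*,y^*)\}$, which renders the ambient flow on $\Gamma$ isomorphic to the planar rock-paper-scissors flow.
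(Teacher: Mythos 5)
Your argument is correct and, for the forward direction, essentially coincides with the paper's: both use the Morse decomposition of $S_0\cap\Gamma$ into the total-extinction equilibrium and the rock--paper--scissors heteroclinic cycle, and both reduce the integral condition of Theorem \ref{maintheorem} to the per-capita growth rates of the missing species at the three vertex equilibria, where equality of the $\hat x_i$ makes the linear inequalities solvable precisely when $\alpha>\beta$ (your choice $p=(1,1,1)$ is exactly the witness that the paper's system of inequalities admits). You are more explicit on two points the paper leaves implicit: that the global attractor lies in the invariant slice $y=(y^*,y^*,y^*)$, so the cycle is genuinely an isolated chain-recurrent set in the full six-dimensional phase space, and that the integral condition holds along the connecting orbits because their time-averages converge to the vertex values; the paper instead relies silently on the ergodic-measure reduction of Proposition \ref{ergeq}, under which the only ergodic measures on the cycle are the Dirac masses at the vertices --- the two justifications are interchangeable. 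Where you genuinely diverge is the converse: the paper applies Corollary \ref{main_cor3} with the reversed inequalities at the heteroclinic-cycle Morse set alone, whereas you read that corollary as requiring the reversed condition at \emph{every} Morse set (which indeed fails at the origin, where all $f_i>0$) and substitute the classical Hofbauer--Sigmund criterion that a symmetric heteroclinic cycle with $\beta>\alpha$ attracts nearby interior orbits. Both routes are sound here; yours imports an external result but sidesteps the quantifier ambiguity in the corollary, and since the slice $y=(y^*,y^*,y^*)$ is invariant you may run the classical argument there verbatim. Two minor remarks: your $\hat x_i=r/a$ with $r=\tfrac12(b-d_m-d_f)$ is the correct equilibrium value (the paper's displayed $\hat x_i$ is off by a factor of $2$, harmlessly, since only equality of the $\hat x_i$ matters), and ``planar'' should read ``three-dimensional'' for the reduced $x$-flow.
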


\begin{proof}
First, note that (\ref{sex_model2}) satisfies the assumptions of our main theorem (\ref{maintheorem}).  The dynamics on the extinction set consist of an unstable equilibrium at $(x,y)=(0, 0, 0)\times (\hat y_1, \hat y_2, \hat y_3)$ and a heteroclinic cycle between single species equilibria (e.g. $(\hat x_1, 0, 0) \times (\hat y_1, \hat y_2, \hat y_3))$ where  
\[
\hat x_i = \frac{b-d_m-d_f}{a},\,\,  \hat y_i = \frac{1}{2} + \frac{d_m-d_f}{2b} \mbox{ and }  x_j= y_j=0 \mbox{ for }j\neq i.  
\]
At these equilibria, the per-capita growth rates of the missing species are $\alpha \hat x_i$ and $-\beta \hat x_i$. Using the Morse decomposition consisting of the zero equilibrium and the heteroclinic cycle,  Theorem \ref{maintheorem} implies that permanence occurs if there exist $p_i>0$ such that
\[
\begin{aligned}
p_1 \times 0 + p_2 \times \alpha \hat x_1 + p_2 \times (-\beta \hat x_1)&>0\\
p_1 \times (-\beta \hat x_2) + p_2 \times 0 + p_2 \times \alpha \hat x_2&>0\\
p_1 \times \alpha \hat x_3 + p_2 \times (-\beta \hat x_3) + p_2 \times0&>0.
\end{aligned}
\]
As $\hat x_1=\hat x_2=\hat x_3$, there is a solution to these linear inequalities if and only if $\alpha>\beta$. Conversely, there is a solution to the reversed linear inequalities if and only if $\beta>\alpha$ and then Theorem \ref{maintheorem} implies that (\ref{sex_model2}) is not permanent.
\end{proof}

Theorem \ref{sex_theorem} yields the same permanence condition as in the classic asexual model. 
Due to our assumption that density-dependent feedbacks are not sex-specific, the system is only partially coupled as the frequency dynamics of $y$ do not depend on $x$. With sex-specific density-dependent feedbacks, the system would be fully coupled but still analytically tractable as these feedbacks would appear as linear functions of $x_i$ in the $y_i$ equations.

\subsection{Quantitative genetics}
In recent years, empirical studies have demonstrated that feedbacks between evolutionary and ecological processes (eco-evolutionary feedbacks) can affect coexistence of species \citep{Lankau2007}.  As a consequence of the growing empirical evidence, theoreticians have developed models that integrate commonly used ecological models with evolutionary equations to study eco-evolutionary feedbacks.  For the evolution of quantitative traits, such as body size, a common approach is to assume that the rate of trait change is proportional to the gradient of per-capita growth with respect to the trait \citep{Lande1976}.
This has led to models of the form

\begin{equation}\label{ecoevo}
\begin{aligned}
\frac{dx_i}{dt}&=x_if_i(x, y) \hspace{3mm} i=1\dots n\\
\frac{dy}{dt}&=\sigma_G^2 \frac{\partial f_j}{\partial y}
\end{aligned}
\end{equation} 

where $y$ represents the mean of an evolving quantitative trait of one of the species $j$, and $\sigma_G^2$ is the heritable variance of the trait \citep{Lande1976}.  
These feedbacks are immediately in the form of (\ref{eq:model}) and we can use Theorem \ref{maintheorem} to identify when eco-evolutionary feedbacks mediate coexistence.

For illustrative purposes, we consider a model developed by \citet{Schreiber2011}. They consider the apparent competition community module, in which two prey species with densities $x_1, x_2$, respectively, share a common predator with density $x_3$.  In this model, the predator population has a quantitative trait that determines the attack rate of individual predators on each prey species.  The quantitative trait is assumed to be normally distributed with variance $\sigma$ in the predator population with mean $y\in [\theta_1, \theta_2]$, where $\theta_i$ is the optimal trait for attacking prey $i$.  They derived a function $a_i(y)$ of the average attack rate of the predator on prey $i$ that decreases with the distance between the trait $y$ and $\theta_i$, given by

\[
a_i(y)=\frac{\alpha_i \tau}{\sqrt{\sigma^2+\tau^2}} \exp\Bigl[-\frac{(y-\theta_i)^2}{2(\sigma^2+\tau^2)}\Bigr]\,.
\]
where $\alpha_i$ is the maximum attack rate on prey $i$ and $\tau>0$ determines how specialized the predator must be to attack each prey.
The coupled dynamics are

\begin{equation}\label{ac}
\begin{aligned}
\frac{dx_i}{dt}&= x_i (r_i(1-x_i/K_i) - x_3 a_i(y)) \hspace{6mm} i=1,2\\
\frac{dx_3}{dt}&= x_3 \,f_3(x, y)\\
\frac{d y}{dt} &= \sigma_G^2  \frac{\partial f_3}{\partial y}
\end{aligned}
\end{equation}
where $K_i>0$ and $r_i>0$ are the carrying capacity and intrinsic growth, respectively, for prey $i$. $f_3$ is the average per-capita growth rate or fitness of the evolving species given by, 
\[
f_3(x,y)=\sum_{i=1}^2 e_i a_i(y) x_i - d
\]

where $e_i>0$ is the efficiency at which the predator converts prey $i$ into new predators and $d>0$ is the intrinsic death rate of the predator.

We can apply Theorem \ref{maintheorem} to characterize permanence of this system.

\begin{theorem}~\label{thm:acpermanence} 
Let $W=\{y\in [\theta_1, \theta_2] | \frac{\partial f_3}{\partial y}(K_1, K_2, y)=0\}$ be the set of equilibria for the trait dynamics when the prey are at carrying capacity and the predator density is zero. 
If 
\begin{enumerate}
\item $\frac{r_i}{a_i(\theta_j)} > \frac{r_j}{a_j(\theta_j)} (1-\frac{d}{a_j(\theta_j)e_jK_j})$ for $i=1,2; i\neq j$ and 
\item $e_1a_1(y^*)K_1 + e_2a_2(y^*)K_2 > d$ for all $y^* \in W$
\end{enumerate}
 then the system is robustly permanent in $\mathbb{R}_+^3 \times [\theta_1, \theta_2]$.  Conversely, if either condition is not met, then the system is not permanent. 
 \end{theorem}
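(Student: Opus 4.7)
My plan is to construct a Morse decomposition of $S_0 \cap \Gamma$, verify the weighted-invasion hypothesis of Theorem \ref{maintheorem} on each Morse set using conditions (1) and (2), and invoke Theorem \ref{maintheorem_pert} for the robust permanence conclusion. Assumption \textbf{S2} follows from a standard dissipativity argument: each prey is eventually bounded by $K_i$ due to logistic self-regulation, after which coupling prey and predator through a Lyapunov function of the form $V = e_1 x_1/K_1 + e_2 x_2/K_2 + x_3$ yields a compact forward-invariant absorbing set $Q \subset \mathbb{R}_+^3 \times [\theta_1, \theta_2]$.

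Next I would analyze the boundary dynamics face by face. On $\{x_3 = 0\}$ the prey converge to $(K_1, K_2)$ and the trait evolves by one-dimensional gradient ascent of $y \mapsto e_1 a_1(y) K_1 + e_2 a_2(y) K_2$, whose critical set on $[\theta_1, \theta_2]$ is exactly $W$. On $\{x_i = 0\}$ with $j \neq i$, $\dot{y} = \sigma_G^2 e_j a_j'(y) x_j$ drives $y \to \theta_j$ whenever $x_j > 0$ (since $a_j$ is unimodal with peak at $\theta_j$), after which the face reduces to a classical Lotka--Volterra prey--predator system with fixed attack rate $a_j(\theta_j)$. This yields the candidate Morse decomposition: the line of trivial equilibria $M_O = \{(0,0,0)\}\times[\theta_1,\theta_2]$; the single-prey equilibria $\{(K_1,0,0,\theta_1)\}$ and $\{(0,K_2,0,\theta_2)\}$; the prey--predator equilibria of the form $(x_j^*, 0, x_3^*, \theta_j)$ and its symmetric counterpart (present only when $e_j a_j(\theta_j) K_j > d$); and, for each $y^*\in W$, the equilibrium $(K_1, K_2, 0, y^*)$, a finite collection of points under generic conditions on $a_1, a_2$.

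I would then compute the per-capita rates $(f_1, f_2, f_3)$ of the missing species on each Morse set and choose suitable positive weights. At $M_O$ the rates are $(r_1, r_2, -d)$, handled by taking $p_3$ small. At each single-prey equilibrium the rates are $(0, r_j, e_i a_i(\theta_i) K_i - d)$ with $j\neq i$; taking $p_j$ large relative to $p_3$ works regardless of the sign of the last entry. At the prey-$j$-plus-predator equilibrium, the only nonzero rate is the invasion rate $r_i - x_3^* a_i(\theta_j)$ of the missing prey, which factors as $a_i(\theta_j)\bigl[r_i/a_i(\theta_j) - (r_j/a_j(\theta_j))(1 - d/(e_j a_j(\theta_j) K_j))\bigr]$ and is positive precisely by condition (1). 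At the equilibrium $(K_1, K_2, 0, y^*)$ the only nonzero rate is the predator invasion rate $e_1 a_1(y^*) K_1 + e_2 a_2(y^*) K_2 - d$, positive by condition (2). Thus Theorem \ref{maintheorem} applies at each Morse set and Theorem \ref{maintheorem_pert} upgrades the conclusion to robust permanence.

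For the converse, failure of (1) for some ordered pair $(i,j)$ forces $e_j a_j(\theta_j) K_j > d$ (otherwise the right-hand side is non-positive and the strict inequality cannot fail), so the equilibrium $(x_j^*, 0, x_3^*, \theta_j)$ exists and has strictly negative invasion rate for the missing prey $i$; failure of (2) at some $y^*\in W$ similarly supplies a boundary equilibrium with strictly negative predator invasion rate. In either case the offending boundary equilibrium is normally hyperbolic with a stable manifold of positive codimension in $\mathbb{R}_+^3 \times [\theta_1,\theta_2]$ extending into the interior, so an open set of interior initial conditions approaches the boundary and permanence fails. The main technical obstacle I anticipate is careful bookkeeping around $W$: verifying that each $y^*\in W$ really does sit in $\Gamma$ as an isolated Morse set, handling degenerate cases where critical points of the trait potential coalesce, ensuring a uniform choice of $T_z$ in the presence of slow gradient flow along the trait coordinate, and confirming that the proposed collection of Morse sets admits a valid Morse ordering.
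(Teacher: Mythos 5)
Your proposal is correct and follows essentially the same route as the paper: the same boundary equilibria as Morse sets, the same case split according to whether $e_ja_j(\theta_j)K_j>d$, the same invasion-rate computations showing that conditions (1) and (2) are exactly the positivity of the missing species' per-capita growth rates at the relevant Morse sets, and the same appeal to Theorems \ref{maintheorem} and \ref{maintheorem_pert}. The one substantive divergence is your treatment of $W$: you take each $(K_1,K_2,0,y^*)$, $y^*\in W$, as a separate Morse set, which is why you must worry about finiteness of $W$, coalescing critical points, and a consistent Morse ordering along the one-dimensional gradient flow. The paper sidesteps all of this by using the single connected Morse set $M_1=\{(K_1,K_2,0)\}\times[y_1,y_2]$ with $y_1=\min_{y\in W}y$ and $y_2=\max_{y\in W}y$; this interval is invariant, and the time-average condition of Theorem \ref{maintheorem} holds at its non-equilibrium points because every trajectory of the trait gradient flow converges to some $y^*\in W$, where condition (2) gives a strictly positive predator growth rate (this is also where your worry about a uniform $T_z$ under slow gradient flow is absorbed, since Theorem \ref{maintheorem} only requires a pointwise $T_z$ and compactness does the rest inside its proof). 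Finally, note that the paper's printed proof does not actually argue the converse; your stable-manifold argument is a reasonable way to supply it, with the small caveat that you do not need an open set of interior initial conditions converging to the offending equilibrium --- a single interior point on its stable manifold already violates \eqref{permanence}.
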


The first condition ensures that prey species $i$ has positive per-capita growth when the predator has evolved to optimize on prey $j\neq i$ ($y=\theta_j)$ and the predator and prey $j$ are at their unique equilibrium densities.  The second condition ensures that when the predator is rare and both prey are at carrying capacity, the predator has positive growth when it evolves to one of its trait equilibria. Using a different approach, \citet{Schreiber2015} show (\ref{ac}) is permanent under these conditions. Our results strengthen their results by showing robust permanence.

\begin{proof}
Equation (\ref{ac}) satisfies the assumptions of Theorem \ref{maintheorem}.  In particular, there is a global attractor $\Gamma$.
Let $M_6= \{(0,0,0)\} \times [\theta_1, \theta_2], M_5=\{(K_1, 0, 0, \theta_1)\}, M_4=\{(0, K_2, 0, \theta_2)\}, M_3=\{(\hat{x}_1, 0, \hat{x}_3^{(1)}, \theta_1)\}, $ and $M_2=\{(0, \hat{x}_2, \hat{x}_3^{(2)}, \theta_2)\}$ where $\hat{x}_i=\frac{d}{e_ia_i(\theta_i)}$ and $\hat{x}_3^{(i)}=\frac{r_i(1-\frac{\hat{x}_i}{K_i})}{a_i(\theta_i)}$.  
Finally, let $M_1= \{(K_1, K_2, 0)\}\times [y_1, y_2]$ where $y_1=\min_{y\in W}y$ and $y_2= \max_{y\in W}y$. 
Schreiber and Patel (2015) consider three separate cases: (i) $d\geq a_1(\theta_1)e_1K_1$, (ii) $a_1(\theta_1)e_1K_1 >d\geq  a_2(\theta_2)e_2K_2$ or (iii) $a_2(\theta_2)e_2K_2>d$.  They show that $\mathcal{M}_1= \{M_1, M_4, M_5, M_6\}$, 
$\mathcal{M}_2= \{M_1, M_3, M_4, M_5, M_6\}$ and
 $\mathcal{M}_3= \{M_1, M_2, M_3, M_4, M_5, M_6\}$ form a Morse decomposition for (\ref{ac}) under case (i), (ii), and (iii) restricted to $\Gamma \cap S_0$, respectively.

Consider case (iii).  For each Morse set  $M_k\in \mathcal{M}_3$, there exist a vector $p_k$ that satisfies the inequality in Theorem \ref{maintheorem} for every point in the set. 
For example, for $\epsilon$ sufficiently small, $\vec{p}_6=(1, 1, \epsilon)$ satisfies the inequality in Theorem \ref{maintheorem} for $M_6$.  Case (ii) and case (i) follow similarly.
\end{proof}

\section{Discussion}
Understanding how abiotic and biotic factors determine coexistence of interacting species is a fundamental problem in ecology.  Ecologists have demonstrated that factors internal to the populations, such as individual variation \citep{Bolnick2011, Violle2012, Hart2016, Barabas2016} and evolution \citep{Lankau2007, Barabas2016}, and factors external to the populations, such as temporal variation in abiotic factors \citep{Hutchinson1961}, can have substantial impacts on population dynamics. Moreover, as these internal and external factors change, their influence on population growth lead to changes in population densities which in turn may alter these factors, thereby creating a feedback loop.  The instrumental role of this feedback on coexistence has been demonstrated both empirically (e.g. \citet{Lankau2007, Chung2016}) as well as theoretically (e.g. \citet{Bever1997, Revilla2013}).  Our work develops the mathematical framework for finding conditions that enable coexistence in community dynamic models with feedbacks and, by applying this theory, elucidates the role of these internal and external feedbacks on coexistence.

We find that if there is a weighting of the species such that the temporal average of the weighted per-capita growth rates is positive whenever a species is missing, then populations coexist with feedbacks. Moreover, given a Morse decomposition of the extinction state, these weightings can differ among the components of this decomposition. For models without feedbacks, this sufficient condition for robust permanence is equivalent to the condition found by \citet{Garay2003}. Hence, our results provide a natural extension to models with internal and external feedbacks. As our examples illustrate, these feedbacks play two critical roles for coexistence.  First, the effect of the feedback variable will influence the Morse decomposition of the extinction set.  Second, feedbacks affect the per-capita growth of each of the species and thereby, influence whether the weighted combination of these growth rates can be positive.  These differences can drive feedbacks to enable or prevent coexistence.

In addition to extending the work of \citet{Garay2003} to include internal and external feedbacks, our general framework and permanence result incorporates existing population models with specific types of internal feedbacks~\citep{Hofbauer2010, Caswell1986, Caswell2001}, external feedbacks \citep{Armstrong1976, deMottoni1981, Zhao2001}, and mixtures of internal and external feedbacks \citep{Hastings2007, Cuddington2009}. Through our examples, we illustrate how to transform several of these earlier results into our framework.  In our first example, we formulate a non-autonomous model with parameters that vary with the environment into our framework by introducing a feedback variable that models the dynamics of the environmental variation. In the second example, we transform a structured population model, in which populations are partitioned into distinct types, into our framework, via a change of variables into frequencies of types and total densities.  The frequencies of the different types within the population act as the internal feedback variables. Finally, in the third example, we demonstrate how our results apply to population models with feedbacks due to  trait evolution. These examples highlight that our framework can help elucidate how populations coexist in a range of ecological scenarios.

In an attempt to explain empirical evidence for coexistence that was incompatible with theoretical predictions, \citet{Hutchinson1961} postulates that changes in the environment that alter the competitive superiority of one species over another can enable coexistence. Our first example highlights that ergodic environmental variation that drives temporal differences in ecological parameters can enable coexistence of populations in a community, but that this depends on the role it has on influencing population growth. In particular, we show that environmental variation that influences species interactions enables coexistence, in comparison to an analogous model that uses time-averaged parameters instead of explicitly accounting for variation.  Our results are an extension of previous work that showed coexistence amongst two competing species with periodic environmental variation (Armstrong and McGehee 1976, Cushing 1980, de Mottoni and Schiaffino 1981) and a specification of the general results for non-autonomous two species models (Zhao 2001).  Notably, our example demonstrates how variation and separation of resource use between two species can enable coexistence through a storage effect ~\citep{Chesson1981}, provided that species are ``stored" through periods they do not use the resource and can sufficiently recover through periods in which they do.  Interestingly, our results also highlight that environmental variation that only influences non-interaction terms, such as per-capita mortality, does not enable coexistence due to the linearity of non-interaction terms in the model.  The necessity for temporal variation to act in nonlinear ways to enable coexistence was also noted for models with stochastic environments \citep{Schreiber2010} as well as in discrete time models with non-overlapping generations \citep{Chesson1981, Chesson1994}. 

In addition to externally-driven temporal variation, internal variation within populations may also impact coexistence. Many reviews highlight that models with internal variation can lead to different predictions and inferences in both empirical and theoretical ecological studies compared to mean field models \citep{Bolnick2011, Violle2012, Hart2016}. Structured population models, a commonly used framework for accounting for internal variation, involve partitioning the population into distinct types, such as based on sex, life stages, or location in space, so that each type has its own growth rate depending on all other types \citep{Caswell1986, Caswell2001}.  Our permanence condition can be used to determine when structured interacting populations coexist. These results apply to structured models that previous results from \citet{Hofbauer2010} do not. Mainly, \citet{Hofbauer2010} made two mathematical assumptions.  First, they assume that there were no negative interactions between individuals of different types.  This assumption may not hold in a number of common ecological scenarios, including models with cannibalism or other forms of interference, which is a prevalent negative interaction between different life stages within a population \citep{fox1975cannibalism, polis1981cannibalism}. Second, they assume continuity in the growth matrices $A_i$, which restricts their framework to models with no frequency dependent growth.  Growth in structured models may be frequency dependent in a number of biological scenarios.  In our example, we apply our results to a sexually-structured model, which, following \citet{Caswell1986}, has frequency dependence since fecundity depends on sex ratios.  In particular, through a change of variable from the densities of different types within a species to total density and frequency of types, structured models can be reformulated into our framework, making the permanence conditions applicable to a broad range of structured models.
 
When individual variation in a population is heritable, this sets the stage for evolution to take place in response to differential selection pressures \citep{Violle2012}. Recent empirical evidence has demonstrated the prevalence of feedbacks between population dynamics and trait evolution (called eco-evolutionary feedbacks; reviewed in \citet{Schoener2011, Lankau2011c} and that these feedbacks may impact population dynamics \citep{Abrams2000, Cortez2010, Vasseur2011, Lankau2009, Schreiber2011, Northfield2013, Patel2015}.  Thus far, few studies have shown permanence in these types of models (but see \citep{Schreiber2011, Schreiber2015}), and we hope that these results will motivate analyses of coexistence in the sense of permanence in future eco-evolutionary studies.  Through our example, we demonstrate how these results can elucidate the conditions for robust permanence in a model where a predator is evolving between traits that are more fit for attacking one prey species versus another. In the absence of eco-evolutionary feedbacks, the prey species exhibit apparent competition: increasing the density of one prey increases the predator density and, thereby, results in a reduction of the other prey species~\citep{Holt1977}.  For highly enriched environments in which the carrying capacities of the prey are large, this apparent competition can lead to exclusion of one of the prey species~\citep{Holt1994}. As the predator evolves to specialize on the more common prey, eco-evolutionary feedbacks can rescue the rare prey from this outcome and enable coexistence. Applying our results to other eco-evolutionary models may provide opportunities to gain a more general understanding of the role of evolution on species coexistence.

Our results here extend existing methods for permanence to account for internal and external feedbacks, generalizing some existing results and broadening their applicability. There are a number of natural avenues that would be useful to develop in the future, including infinite dimensional models and stochastic models.  We assume feedbacks are contained in $\mathbb{R}_+^m$.  However, some internal and external feedbacks may be better captured in infinite dimensions and extending our results to account for this may be useful (e.g. integral projection models; \citet{Easterling2000}).  Permanence has been studied in general infinite dimensional dynamical systems \citep{Hale1989} as well as in models with specific types of feedbacks, including those captured through continuous spatial heterogeneity  \citep{dunbar1986, cantrell1993, cantrell1996, Cantrell2003, zhao1994, furter1997, Mierczynski2004} and time delays \citep{Burton1989, Freedman1995, Ruan1999}.  Whether transforming these models into a framework analogous to the one here is useful requires further exploration. Furthermore, populations may feedback with random internal or external factors. Extending permanence results for stochastic population models to account for feedbacks will enable comparisons to our framework to understand broadly the role of random feedbacks on coexistence.  
With the growing number of empirical studies investigating internal and external factors that influence population dynamics, ecological models are becoming more sophisticated.  In order for permanence to remain an important concept in ecology, the methods for demonstrating permanence must continue to expand to these new modeling frameworks.  

\section*{Acknowledgements}
Financial support by the U.S. National Science Foundation Grant DMS-1313418 to SJS, the American Association of University Women Dissertation Fellowship to SP and the Austrian Science Fund (FWF) Grant P25188-N25 to Reinhard Burger at the University of Vienna is gratefully acknowledged.


\setcounter{section}{0}
\renewcommand{\thesection}{A\arabic{section}}
\renewcommand{\thefigure}{A\arabic{figure}}
\setcounter{figure}{0}
\setcounter{equation}{0}
\renewcommand\theequation{A\arabic{equation}}
\renewcommand\thedefinition{A\arabic{definition}}

\section{Proof of Theorem \ref{maintheorem}}\label{mainproof}
To prove Theorem \ref{maintheorem}, we begin by introducing ``good average Lyapunov functions" and proving a more general theorem. 

\begin{definition}
A continuous map $P: U \rightarrow \mathbb{R}$, where $U\subset S$ is an open set, is a good average Lyapunov function (GALF) for (\ref{eq:model}) if 
\begin{itemize}
\item $P(z)=0$ for all $z\in S_0 \cap U$ and $P(z)>0$ for all $z\in$ $(S\backslash S_0) \cap U$,
\item $P$ is differentiable on $(S\backslash S_0) \cap U$,
\item $\frac{\partial P}{\partial y_j}=0$ for all $j$,
\item  $p_i:=\frac{x_i}{P}\frac{\partial P}{\partial x_i}$, which are continuous functions defined on $(S\backslash S_0)\cap U$ and extend continuously to $S \cap U$, and

\item for every $z\in  S_0 \cap U$, there is a $T_z>0$ such that $z.t\in U$ for $t\in [0,T_z]$, and 

$$\int_0^{T_z} \sum_i p_i(z.t)f_i(z.t) dt >0.$$

\end{itemize}
\end{definition}

We prove the following theorem

\begin{theorem}
Let $\mathcal{M}=\{M_1, M_2, \dots M_\ell\}$ be a Morse Decomposition for (\ref{eq:model}) restricted to $S_0\cap \Gamma$.  
For each $k$, let $U_k$ be an open neighborhood of $M_k$ and let $P_k: U_k \rightarrow \mathbb{R}$ be a good average Lyapunov function for (\ref{eq:model}).  Then (\ref{eq:model}) is permanent.
\end{theorem}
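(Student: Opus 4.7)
The plan is to follow the standard GALF/Lyapunov strategy for permanence, localized to each Morse set via the $P_k$'s, and contradict non-permanence by producing unbounded growth of some $P_k$ along a bounded trajectory. First, I would use the fact, derivable from $p_i = (x_i/P_k)\partial P_k/\partial x_i$ and $\partial P_k/\partial y_j = 0$, that along any orbit in $(S\setminus S_0)\cap U_k$,
\[
\log P_k(z.\tau) - \log P_k(z) = \int_0^\tau \sum_i p_i(z.t) f_i(z.t)\,dt.
\]
Since each $p_i$ extends continuously to $S\cap U_k$, and the GALF integrand is strictly positive at every $z\in M_k$ (with its own $T_z$), a compactness and finite-covering argument yields a quantitative uniform growth lemma: there exist an open neighborhood $V_k\subseteq U_k$ of $M_k$, together with constants $\tau_k,c_k>0$, such that whenever $z\in V_k\cap(S\setminus S_0)$ and the trajectory segment $z.[0,\tau_k]$ remains in $V_k$, one has $\log P_k(z.\tau_k)-\log P_k(z)\geq c_k$.

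Next I would reduce permanence to showing that $S_0\cap\Gamma$ is a repeller in $\Gamma$, i.e. that no $z_0\in\Gamma\setminus S_0$ satisfies $\omega(z_0)\cap(S_0\cap\Gamma)\neq\emptyset$. This is the standard characterization (used already in Corollary~\ref{main_cor3} via the reverse flow). Assuming such a $z_0$ exists, $\omega(z_0)$ is a compact, connected, invariant subset of $\Gamma$ that meets $S_0$. Because $\mathcal{M}$ is a Morse decomposition of $S_0\cap\Gamma$, any orbit of $\omega(z_0)$ lying in $S_0$ has its $\alpha$- and $\omega$-limits in the Morse sets, and in particular $\omega(z_0)\cap M_k\neq\emptyset$ for some $k$. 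If $\omega(z_0)\subseteq S_0$, the forward orbit of $z_0$ eventually stays in $V_k$, so iterating the uniform growth lemma on successive intervals of length $\tau_k$ gives $P_k(z_0.t)\to\infty$, contradicting the boundedness of $P_k$ on the compact set $\overline{V_k}\cap\Gamma$.

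The harder case, which I expect to be the main obstacle, is when $\omega(z_0)$ meets $M_k$ but is not contained in $S_0$: the trajectory $z_0.t$ repeatedly enters and leaves $V_k$. Here I would invoke a Butler-McGehee-type argument to produce, in $\omega(z_0)$, an entire orbit whose $\alpha$-limit lies in $M_k$ and which also visits $V_k\setminus M_k$; shrinking $V_k$ if necessary, I would then choose the return times $t_n\to\infty$ of $z_0.t$ into a small sub-neighborhood $V_k'\Subset V_k$, and control $\log P_k$ on the excursions $[t_n+\tau_k,t_{n+1}]$ outside $V_k$: since $\sum_i p_i f_i$ is bounded on the compact invariant hull of the trajectory and the excursion lengths are bounded (by a chaining/recurrence argument), the loss in $\log P_k$ during each excursion is at most some fixed constant $L$. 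Choosing $V_k$ small enough to iterate the growth lemma several times per visit and thereby ensure a net gain $\geq c_k-L>0$ per return cycle forces $\log P_k(z_0.t_n)\to\infty$, again contradicting boundedness.

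Putting the three steps together, the assumption of non-permanence is incompatible with the existence of a GALF at every Morse set, so permanence holds. The delicate point—matching the local growth rate $c_k$ against the bounded logarithmic loss $L$ during excursions—is precisely where shrinking the neighborhood $V_k$ (and hence enlarging the number of applications of the growth lemma per sojourn) must be combined with the Butler-McGehee/chain-recurrence structure to close the argument; this is the step on which the proof most crucially hinges.
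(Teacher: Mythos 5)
Your first step --- the uniform growth estimate on a neighborhood $V_k$ of $M_k$, obtained from continuity of the integral functional and a finite subcover of $M_k$, together with the conclusion that no forward trajectory from $(S\setminus S_0)\cap V_k$ can remain in $V_k$ forever --- is exactly the paper's argument (the paper reaches the same conclusion by maximizing $P_k$ over the closure of such a trajectory rather than by iterating the growth lemma, but these are equivalent). The divergence is in what you do with it. The paper stops at the local statement: it invokes the Corollary to Theorem 2 of \citet{Garay1989}, which asserts that permanence follows once each Morse set $M_k$ is isolated and $(S\setminus S_0)\cap W^s(M_k)=\emptyset$, and both conditions are precisely what the growth lemma delivers. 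You instead attempt to prove the global statement --- that $S_0\cap\Gamma$ is a repeller --- by hand, via a Butler--McGehee argument plus control of $\log P_k$ on excursions away from $V_k$, and that is where your proof has a genuine gap.

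The problematic claim is that the logarithmic loss per excursion is bounded by a fixed constant $L$. The loss \emph{rate} $\sum_i p_i f_i$ is bounded on compact sets, but the excursion \emph{times} are not: between successive returns to $V_k$ the trajectory may approach a different Morse set $M_j$ and linger near it for arbitrarily long stretches during which the $M_k$-weights give a negative integrand, so the loss per return cycle is unbounded and no amount of shrinking $V_k$ (to stack more applications of the growth lemma per sojourn) restores a net gain. Your appeal to a ``chaining/recurrence argument'' to bound excursion lengths does not hold in general. Excluding this behaviour requires the global structure of the Morse decomposition --- the ordering of the $M_j$, which forbids cyclic chains, with a possibly different weight vector near each $M_j$ --- and that is exactly the bookkeeping that Garay's theorem packages; it cannot be reproduced by an estimate local to a single $M_k$. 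To close your proof you would either need to cite such a Morse-theoretic permanence criterion, as the paper does, or reconstruct the attractor--repeller and chain-recurrence argument across all Morse sets simultaneously, which is a substantially larger undertaking than the excursion estimate you sketch.
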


\begin{proof}
Fix $k$.  Let $F(z, T):=\int_0^T \sum_i p_i(z.t)f_i(z.t) dt$ for all $z\in U_k$ and $T\geq 0$, where $p_i$ is defined from the definition of a GALF. 

By definition of the GALF, for all $z \in M_k$, there is a $T_z>0$ and $\delta(z)>0$ such that
 \[
F(z,T_z) > \delta(z)
\]

By continuity of $F$, there is a neighborhood $V_z\subset U_k$ of $z\in M_k$ such that $F(v, T_z)>\frac{\delta(z)}{2}$ for all $v\in V_z$. The collection of sets $\{V_z\}_{z\in M_k}$ forms an open cover of $M_k$. By compactness, there is a finite subcover $\{V_{z_j}\}_{j=1}^\ell$.  Let $V_k=\cup_{j=1}^\ell V_{z_j}, c=\frac{1}{2}\min\{\delta(z_j)\}_{j=1}^\ell$ and $T=\max\{T_j\}_{j=1}^\ell$, where $T_j:=T_{z_j}$. 

Then, $V_k\subset U_k$ is a neighborhood of $M_k$ such that for all $z\in V_k$ there is a $0<T_j<T$ satisfying
\[
F(z,T_j) >c.
\]

Furthermore, for $z\in (S\backslash S_0)\cap V_k$,
\begin{equation*}
\begin{aligned}
\ln(P(z.T_j))- \ln(P(z)) &= \int_0^{T_j} \frac{1}{P(z.s)} \frac{d}{ds} P(z.s)ds\\
& = \int_0^{T_j} \frac{1}{P(z.s)} (\sum_{i=1}^n \frac{\partial P}{\partial x_i}\frac{dx_i}{ds} + \sum_{i=1}^m  \frac{\partial P}{\partial y_i}\frac{dy_i}{ds}) ds\\
& = \int_0^{T_j} \sum_{i=1}^n p_i(z.s)f_i(z.s) ds > c,
\end{aligned}
\end{equation*}
which gives

\begin{equation}\label{ineq}
P(z(T_j))> (1+c)P(z).
\end{equation}

By the Corollary to Theorem 2 from \citet{Garay1989}, permanence follows from showing that each $M_k$ is isolated and that $(S\backslash S_0) \cap W^s(M_k)= \emptyset$, where $W^s(M_k)=\{z\in S| \emptyset\neq \omega(z)\subset M_k\}$. For any initial condition $z$, let $\gamma^+(z)=z.[0,\infty)$ be the forward trajectory of $z$.  
Assume there is a $z\in (S\backslash S_0)\cap V_k$ such that $\gamma^+(z)\subseteq V_k$. Then, there is a $z^* \in \overline{\gamma^+(z)}\cap V_k$ such that $P(z^*)=\max_{v\in\overline{\gamma^+(z)}\cap V_k} P(v)$.  Then, either (i) there exists a $t^*>0$ such that $z^*=z.t^*$ or (ii) there exists $t_n \rightarrow \infty$ such that $z_n:=z.t_n$ converges to $z^*$ as $n\rightarrow \infty$.  If (i), then equation (\ref{ineq}) implies $P(z^*.T_{z^*})>(1+c)P(z^*)$ for some $T_{z^*}>0$, which is a contradiction to the choice of $z^*$ since $z^*.T_{z^*} \in \overline{\gamma^+(v)}$.  If (ii), then for some sequence $T_{z_n}>0, P(z_n.T_{z_n})> (1+c) P(z_n) \rightarrow (1+c) P(z^*)$, which is a contradiction to the choice of $z^*$ since $z_n.T_{z_n}\in \overline{\gamma^+(z)}$ for all $n$.

Hence, for all $v\in (S\backslash S_0)\cap V_k$, $\gamma^+(v)\backslash V_k \neq \emptyset$. 
It follows that $M_k$ is isolated and for all $v \in (S\backslash S_0)\cap V_k$, $\omega(v) \not\subset M_k$.  The latter gives that $(S\backslash S_0)\cap W^s(M_k)  = \emptyset$.

\end{proof}

Theorem \ref{maintheorem} immediately follows when using the standard form (\ref{GALFstandard}) as a GALF on each Morse set.  It is easy to see that when $p_{ki}>0$ for all $i,k$, (\ref{GALFstandard}) satisfies the first four properties of a GALF. In particular, the fourth and fifth property follow since $\frac{x_i}{P}\frac{\partial P}{\partial x_i}=p_{ki}$ and hence is constant in $U$.

\section{Proof of Theorem \ref{maintheorem_pert}}\label{mainproof_pert}
To show robust permanence, will use the following theorem from \citet[Corollary 4.5]{Hirsch2001}. Note that Corollary 4.5 \cite{Hirsch2001} is for maps, but an analogous proof for flows holds (see \citet{Hirsch2001} Remark 4.6).

\begin{theorem}\label{Hirsch}
Let $(S, d)$ and $(\Lambda, \rho)$ be metric spaces.  For each $\lambda\in \Lambda$, let $\phi_\lambda : S\times \mathbb{R} \rightarrow S$ be a flow that is continuous in $\lambda, z, t$. Let $S_p\subset S$ be an open subset such that $S_p$ is invariant for all $\lambda$ and let $\partial S=S\backslash S_p$. Also, assume that every forward trajectory for $\phi_\lambda$ has compact closure in $S$ and that $\bigcup_{\lambda \in \Lambda, z\in S} \omega_\lambda (z)$ has compact closure, where $\omega_\lambda$ denotes the $\omega$-limit for $\phi_\lambda$. Let $\lambda_0\in\Lambda$ be fixed and assume that
\begin{description}
\item[T1] $\phi_{\lambda_0}$ has a global attractor $\Gamma$ and there exists a Morse Decompositions $\{M_1, \dots, M_\ell\}$ on $\Gamma\cap \partial S$
\item[T2] there exists $\delta_0>0$ such that for any $\lambda\in\Lambda$ with $\rho(\lambda, \lambda_0)<\delta_0$ and any $z\in S_p$, \\
$\limsup_{t\rightarrow\infty} d(\phi_\lambda(z,t), M_k)\geq \delta_0,$ for all $1\leq k\leq\ell.$
\end{description}

Then, there exists a $\beta>0$ and $\delta>0$ such that $\liminf_{t\rightarrow\infty} d(\phi_\lambda(z,t), \partial S)\geq \beta$ for any $\lambda\in\Lambda$ with $\rho(\lambda,\lambda_0)<\delta$ and any $z\in S$.
\end{theorem}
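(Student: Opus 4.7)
The plan is to argue by contradiction, combining upper semicontinuity of the perturbed flow in $\lambda$ with the acyclic structure of the Morse decomposition and a Butler--McGehee-type extraction. Suppose the conclusion fails. Then for each $n\in\mathbb{N}$ there exist $\lambda_n\to\lambda_0$ and $z_n\in S_p$ such that the $\phi_{\lambda_n}$-orbit of $z_n$ comes within $1/n$ of $\partial S$ at some time $t_n$, which we may take to diverge to infinity. The global compactness hypothesis on $\overline{\bigcup_{\lambda,z}\omega_\lambda(z)}$ lets me pass to a subsequence along which $p_n:=\phi_{\lambda_n}(z_n,t_n)\to p^*\in\partial S$.

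The first step is to identify which Morse set the limiting dynamics approach. Under the standing assumption that $\partial S$ is forward invariant for the unperturbed flow (implicit in T1, since otherwise the Morse decomposition on $\Gamma\cap\partial S$ would not describe the boundary dynamics coherently), the forward $\phi_{\lambda_0}$-orbit of $p^*$ lies in $\Gamma\cap\partial S$, so $\omega_{\lambda_0}(p^*)\subset M_{k^*}$ for some index $k^*$. Joint continuity of $\phi_\lambda$ in $(\lambda,z,t)$ transfers this to the perturbed orbit: for every $\eta>0$ and every finite window $[0,T]$, the translated orbit $s\mapsto\phi_{\lambda_n}(z_n,t_n+s)$ lies within $\eta$ of $s\mapsto\phi_{\lambda_0}(p^*,s)$ on $[0,T]$ for all large $n$. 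Consequently, the translated perturbed orbit spends an arbitrarily long prescribed time inside an arbitrarily small neighborhood of $M_{k^*}$.

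The main obstacle is that a single near-approach does not yet violate T2, which bounds only the $\limsup$. My plan is to propagate the contradiction by descent through the Morse order. Since T2 forces the perturbed orbit to exit a fixed $\delta_0$-neighborhood of $M_{k^*}$ at a later time, I track the first such exit time, pass to a further subsequential limit of the unperturbed flow, and invoke the Butler--McGehee lemma: the resulting limit orbit must leave every small neighborhood of $M_{k^*}$, so a point on its unstable side has $\alpha$-limit in a strictly higher Morse set $M_{k_1}$ by the acyclic ordering. Iterating this construction yields a strictly increasing chain $k^*<k_1<k_2<\cdots$ of Morse indices, which cannot continue because $\{M_1,\dots,M_\ell\}$ is finite, producing the contradiction. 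The most delicate quantitative point at each stage is to keep the perturbed-to-unperturbed tracking tight over increasingly long time windows; this is controlled using uniform continuity of each $\phi_\lambda$ on the compact set $\overline{\bigcup_{\lambda,z}\omega_\lambda(z)}$ together with the isolation of each $M_k$.
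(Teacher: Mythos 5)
First, a point of reference: the paper does not prove this statement at all --- it is quoted from \citet{Hirsch2001} (Corollary 4.5, with Remark 4.6 for flows), so your attempt must be judged against that source, whose proof rests on internally chain transitive sets rather than on the Butler--McGehee lemma. Measured that way, your sketch has a genuine gap, and it sits exactly where you place the weight of the argument: the ``descent through the Morse order.'' The Morse decomposition in \textbf{T1} orders only the dynamics inside $\Gamma\cap\partial S$: for $z\in(\Gamma\cap\partial S)\setminus\bigcup_k M_k$ one has $\omega(z)\subset M_i$ and $\alpha(z)\subset M_j$ with $i<j$. Your connecting orbit, however, is a limit of \emph{interior} perturbed orbits, and nothing confines it to $\partial S$ once it exits the neighborhood of $M_{k^*}$: it can leave through $S_p$, and there the Morse order imposes no constraint on where it goes next. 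Worse, hypothesis \textbf{T2} applied to $\lambda_0$ itself (which is $\delta_0$-close to itself) says precisely that interior orbits of the unperturbed flow do \emph{not} converge to any $M_k$; so the forward limit set of an interior connecting orbit is in general not contained in any Morse set, and the orbit may afterwards pass near an arbitrary Morse set, higher or lower. Hence the claimed strictly monotone chain of Morse indices does not exist (its direction is also backwards relative to the paper's convention, where $\alpha$-limits sit in the \emph{higher}-indexed set), and the finiteness contradiction evaporates. Classical acyclicity proofs close this loop by constructing a \emph{cycle} and handling interior connections explicitly; you never do either, and you never invoke \textbf{T2} for $\lambda_0$, which is indispensable for controlling those interior excursions.

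There is a second, quantitative, gap that you flag but do not resolve: joint continuity gives closeness of $\phi_{\lambda_n}$-orbits to $\phi_{\lambda_0}$-orbits only on windows $[0,T]$ of \emph{fixed} length, while your iteration needs control over windows whose length grows along the subsequence. Uniform continuity of the flows on a compact set does not repair this, because the tracking error accumulates with the length of the window. This is exactly the difficulty the machinery of \citet{Hirsch2001} is built to absorb: one takes $L$ to be a Hausdorff limit of the omega limit sets $\omega_{\lambda_n}(z_n)$ and uses the key perturbation lemma that such a limit is internally chain transitive for $\phi_{\lambda_0}$ (pseudo-orbits, unlike genuine orbits, tolerate accumulated error). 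Then $L\cap\partial S\neq\emptyset$, while \textbf{T2} for the $\lambda_n$ forces $L$ to contain points at distance at least $\delta_0$ from every $M_k$. The contradiction follows from attractor--repeller dichotomies for chain transitive sets: using that \textbf{T1}--\textbf{T2} make $\phi_{\lambda_0}$ itself uniformly persistent, $\Gamma\cap\partial S$ is a repeller in $\Gamma$ with interior dual attractor, so the chain transitive set $L\subset\Gamma$ must lie either in that interior attractor (impossible, since $L$ meets $\partial S$) or in $\Gamma\cap\partial S$, where chain transitivity forces $L$ into a single Morse set (impossible, by the $\delta_0$ bound). A smaller fixable point: extracting $p_n\to p^*$ requires choosing $t_n$ so that $\phi_{\lambda_n}(z_n,t_n)$ is close to $\omega_{\lambda_n}(z_n)$, since only $\bigcup_{\lambda,z}\omega_\lambda(z)$, not the set of all orbit points, is assumed precompact.
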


To apply this theorem, endow $\Lambda=\Delta(1,Q)$ with the sup norm (i.e. $\rho(h_1,h_2)=\|(h_1,h_2)\|_\infty = \sup_{z\in Q} \| h_1(z),h_2(z)\|$). Let $S=\mathbb{R}^{n+m}$ with the standard metric and let $S_p=S\backslash S_0$. By \textbf{R3} in the definition of perturbations, $\bigcup_{(\widetilde{f},\widetilde{g}) \in \Delta(1,Q), z\in S} \widetilde\omega(z)\subset Q$ and so has compact closure.  For $(f,g)$, we have a Morse Decomposition on $\Gamma \cap S_0$. 

Hence, we only have to show \textbf{T2}.  To do this, we show that there exists a $1>\delta>0$ sufficiently small such that if $P_k$ is a GALF on $U_k$ for (\ref{eq:model}), then it is also a GALF on $V_k$ for every $(\tilde{f}, \tilde{g}) \in \Delta(\delta, Q)$, where $V_k$ is defined in the proof of Theorem \ref{maintheorem}. 
First, note that the first four conditions defining a GALF are properties of $P$ and independent of the flow.  Hence, these are still satisfied.  We must show the final condition: for a $\delta>0$ sufficiently small, for all $z\in V_k \cap S_0$, there is a $T_z>0$ such that

\begin{equation}
\int_0^{T_{z}} \sum_{i=1}^n p_i(\tilde{z}.t)\tilde{f}_i(\tilde{z}.t) dt > 0
\end{equation}

for every $(\tilde{f}, \tilde{g}) \in \Delta(\delta, Q)$.

Let $F(z, T):=\int_0^T \sum_i p_i(z.t)f_i(z.t) dt$ and $\tilde{F}(z, T):=\int_0^T \sum_i p_i(\tilde{z}.t)\tilde{f}_i(\tilde{z}.t) dt$.

Let $T,c$ be such that for all $z \in V_k$, there is a $0<T_z<T$ for which $F(z, T_z)>c$, as in the proof of Theorem \ref{maintheorem}.  Then, for sufficiently small $\delta>0$,

\begin{equation*}
\begin{aligned}
|\tilde{F}(z, T_z)-F(z, T_z)|&\leq \int_0^{T_z} \sum_1^n |p_i(\tilde{z}.t)\tilde{f}_i(\tilde{z}.t)- p_i(\tilde{z}.t)f_i(\tilde{z}.t)| + |p_i(\tilde{z}.t)f_i(\tilde{z}.t)- p_i(z.t)f_i(z.t)|dt\\
&<\frac{c}{2}
\end{aligned}
\end{equation*}

for every $(\tilde{f}, \tilde{g}) \in \Delta(\delta, Q)$ and all $T_z\in [0,T]$.  The first inequality follows from triangle inequality.  The second inequality follows from \textbf{R1} in the definition of perturbations, which constrains the first difference in the sum, and Gronwall's inequality and Lipschitz continuity of $x_if_i$ and $g_i$, which constrain the second difference in the sum.

Finally, for all $z\in  S_0 \cap V_k$, there is a $0<T_z<T$ such that

$$\tilde{F}(z, T_z)\geq -|\tilde{F}(z, T_z)-F(z, T_z)| + F(z, T_z)\geq -\frac{c}{2} +c=\frac{c}{2}>0$$

Hence, for sufficiently small $\delta>0$, $P_k$ is a GALF on isolating neighborhood $V_k$ for every $(\tilde{f}, \tilde{g}) \in \Delta(\delta, Q)$. By the same argument as in the proof of Theorem \ref{maintheorem}, this implies that for all $z\in S_p \cap V_k$, $\gamma^+(z)\not\subseteq V_k$, which implies \textbf{T2}. 
Finally, by Theorem \ref{Hirsch}, every $(\tilde{f}, \tilde{g})\in \Delta(\delta, Q)$ is permanent and there is a uniform lower bound $\beta>0$ in the definition of permanence for all $(\tilde{f},\tilde{g})$. 

\section{Proof of Proposition \ref{LVprop}}\label{prop_proof}
For this proof, we use a measure theoretic framework for verifying the condition in our main theorem. We begin with some terminology.
A Borel probability measure $\mu$ on $\mathbb{R}_+^{n+m}$ is \emph{invariant} if $\mu(B)=\mu(B.t)$ for all $t>0$ and every Borel set $B\subset \mathbb{R}_+^{n+m}$.  An invariant measure $\mu$ is \emph{ergodic} if $\mu(B)=0$ or $1$ for any invariant Borel set $B$.  
Importantly, the following proposition refines the set on which the condition in our main theorem must hold.

\begin{proposition}\label{ergeq}
Let $E\subset S$ be a compact invariant set and $h: E\rightarrow \mathbb{R}$ a continuous function. Then, the following are equivalent:
\begin{enumerate}
\item For all $z \in E$, there is a $T_z>0$ such that $\int_0^{T_z} h(z(t))dt >0$\label{a}
\item $\int h(z) \mu(dz)>0$ for all invariant probability measures $\mu$ supported by $E$\label{b}
\item $\int h(z) \nu(dz)>0$ for all ergodic probability measures $\nu$ supported by $E$\label{c}
\end{enumerate}
\end{proposition}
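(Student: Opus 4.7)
The plan is to close the cycle (3) $\Rightarrow$ (2) $\Rightarrow$ (1) $\Rightarrow$ (3), since (2) $\Rightarrow$ (3) is immediate from the fact that every ergodic measure is invariant. For (3) $\Rightarrow$ (2), I will invoke the ergodic decomposition theorem: since $E$ is compact and the flow is continuous, any invariant Borel probability $\mu$ on $E$ admits a decomposition $\mu=\int \nu\, d\rho(\nu)$ where $\rho$ lives on ergodic probabilities, and Fubini gives $\int h\,d\mu=\int\left(\int h\,d\nu\right)d\rho(\nu)>0$.

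For (2) $\Rightarrow$ (1), I will argue by contraposition. Suppose there is $z_0\in E$ with $\int_0^T h(z_0.t)\,dt\le 0$ for every $T>0$. Form the empirical measures $\mu_T:=\frac{1}{T}\int_0^T \delta_{z_0.s}\,ds$, which are Borel probabilities supported in the compact set $\overline{\gamma^+(z_0)}\subseteq E$. By Banach-Alaoglu/Prokhorov there is a subsequence $T_n\to\infty$ along which $\mu_{T_n}\to\mu$ weakly, and the standard Krylov-Bogoliubov computation, using continuity of the flow on $E$, shows $\mu$ is flow-invariant with support in $E$. Then $\int h\,d\mu=\lim_n \frac{1}{T_n}\int_0^{T_n}h(z_0.s)\,ds\le 0$, contradicting (2).

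The main step, and the only one that requires real work, is (1) $\Rightarrow$ (3). For each $z\in E$, the continuity of $(w,t)\mapsto h(w.t)$ together with the hypothesis produces an open neighborhood $V_z$ of $z$ and a constant $c_z>0$ such that $\int_0^{T_z} h(w.t)\,dt>c_z$ for all $w\in V_z$. By compactness of $E$, extract a finite subcover $V_{z_1},\dots,V_{z_\ell}$ and set $c:=\min_k c_{z_k}>0$ and $T^\ast:=\max_k T_{z_k}$. Then every $w\in E$ admits a ``return time'' $\tau(w)\in\{T_{z_1},\dots,T_{z_\ell}\}$ with $\int_0^{\tau(w)}h(w.t)\,dt>c$. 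Iterating at any fixed $z\in E$ produces $0<s_1<s_2<\cdots$ with $s_{k+1}-s_k\le T^\ast$ and $\int_0^{s_k} h(z.t)\,dt>kc$, hence the uniform bound
\[
\frac{1}{s_k}\int_0^{s_k} h(z.t)\,dt>\frac{c}{T^\ast}\qquad\text{for every }z\in E,\ k\ge 1.
\]
Given an ergodic probability $\nu$ on $E$, Birkhoff's ergodic theorem yields $\frac{1}{T}\int_0^T h(z.s)\,ds\to\int h\,d\nu$ for $\nu$-a.e.\ $z$; evaluating this limit along the subsequence $\{s_k\}$ (which tends to $\infty$ since $s_k\ge k$ requires only finitely many distinct return times) gives $\int h\,d\nu\ge c/T^\ast>0$, establishing (3).

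The principal obstacle is converting the pointwise, finite-window positivity in (1) into a uniform lower bound on ergodic averages; the covering argument plus the iteration of ``return times'' is precisely what bridges that gap, and without the compactness of $E$ one could not pass from the data-dependent $T_z$ to the uniform $T^\ast$ required to invoke Birkhoff productively.
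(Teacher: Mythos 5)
Your proof is correct. The paper itself does not really prove this proposition: it cites Lemmas 4.1 and 4.2 of Garay and Hofbauer (2003) for the equivalence of (1) and (2), notes that (2) $\Rightarrow$ (3) is trivial because ergodic measures are invariant, and invokes the ergodic decomposition theorem for (3) $\Rightarrow$ (2). You reproduce the last two steps identically, but you replace the citation with a self-contained argument, and you close the cycle along a slightly different edge: you prove (1) $\Rightarrow$ (3) directly (compactness cover, uniform constants $c$ and $T^\ast$, iteration of return times to get $\frac{1}{s_k}\int_0^{s_k} h(z.t)\,dt > c/T^\ast$ along a sequence $s_k\to\infty$, then Birkhoff for the ergodic measure), whereas the cited lemmas establish (1) $\Leftrightarrow$ (2) and the route to (3) passes through (2). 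Your (2) $\Rightarrow$ (1) via Krylov--Bogoliubov on the empirical measures of a hypothetical bad orbit is exactly the standard mechanism underlying the cited lemmas. The payoff of your version is that it exposes the quantitative content hidden in the citation --- the uniform positive lower bound $c/T^\ast$ on long-run time averages over all of $E$ --- which is precisely the ingredient the GALF construction in Appendix A1 relies on; the cost is length. Two small points of hygiene: the claim that $s_k\to\infty$ should be justified by $s_k\ge k\min_j T_{z_j}>0$ (your parenthetical gestures at this but states it loosely), and the continuity of $w\mapsto\int_0^{T_z}h(w.t)\,dt$ used for the open cover deserves a word, following from continuity of the flow (guaranteed by \textbf{S1}) and of $h$ on the compact invariant set $E$.
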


The equivalence of (\ref{a}) and (\ref{b}) in Proposition \ref{ergeq} follows directly from Lemmas 4.1 and 4.2 in \citet{Garay2003}.  (\ref{b})$\Longrightarrow$ (\ref{c}) follows as ergodic probability measures are invariant probability measures. (\ref{c})$\Longrightarrow$ (\ref{b}) follows from the Ergodic decomposition theorem. This proposition implies that the final condition of the GALF need only be checked on the smallest invariant set containing the support of all the ergodic measures.  

\begin{proof}[Proof of Proposition \ref{LVprop}]
Choose the trivial Morse decomposition $\{\Gamma\cap S_0\}$ and consider the set of all ergodic measures $\{\mu_1,\dots, \mu_m\}$ supported by $\{\Gamma\cap S_0\}$. An interior face of $\{\Gamma\cap S_0\}$ is the set $\{(x,y)|x_i>0 \Longleftrightarrow i\in I\}$ for some proper subset $I\subset \{1,\dots, n\}$.  Since each interior face is an invariant set, all ergodic measures are supported on interior faces. For each $\mu_k$, let $\eta_k$ be the support of $\mu_k$ and let $\sigma_k$ be the interior face so that $\eta_k\subset \sigma_k$.  Then, for some weight $p$,

\begin{equation}\label{ergproofeq}
\begin{aligned}
\int \sum_i p_i  f_i(x,y)d\mu_k(x,y)=&\lim_{t\rightarrow\infty}\frac{1}{t}\int_0^t \sum_i p_i f_i(x(s),y(s))ds\\
=&\lim_{t\rightarrow\infty} \frac{1}{t} \int_0^t \sum_i p_i (\sum_j a_{ij} x_j(s) + b_i(y(s)))ds\\
=&\sum_i p_i [\sum_j a_{ij} (\lim_{t\rightarrow\infty} \frac{1}{t} \int_0^t x_j(s)ds) + (\lim_{t\rightarrow\infty} \frac{1}{t} \int_0^t b_i(y(s))ds)]\\
\end{aligned}
\end{equation}
where $(x(t), y(t))=z.t$ with $z\in \eta_k$.  ~\citet{Hofbauer1998}[Theorem 5.2.3] implies that 
\[
\lim_{t\rightarrow\infty} \frac{1}{t} \int_0^t x_j(s)ds=x^*_j
\]
for a unique $x^*$ on $\sigma_k$ such that $\sum_j a_{ij} x_j^* = -\overline{b_i}$ whenever $x_i^*>0$.
Finally, the assumption implies that there exists a $\vec{p}$ such that (\ref{ergproofeq}) is greater than zero and Proposition \ref{ergeq} and Theorem \ref{maintheorem_pert} conclude the proof. 

\end{proof}

\bibliography{Bib.bib}

\end{document}